\newcommand{\llbracket}{{[\![}}
\newcommand{\rrbracket}{{]\!]}}
\newcommand{\rc}[1]{\tikzmarkin[set fill color=red!50, set border color=red!50]{#1}(0.3,-0.1)(-0.15,0.3)\tikzmarkend{#1}}
\newcommand{\gc}[1]{\tikzmarkin[set fill color=green!50!lime!30, set border color=green!50!lime!30]{#1}(0.3,-0.1)(-0.15,0.3)}
\newcommand{\pc}[1]{\tikzmarkin[set fill color=purple!30, set border color=purple!30]{#1}(0.3,-0.1)(-0.15,0.3)\tikzmarkend{#1}}
\newcommand{\egc}[1]{\tikzmarkend{#1}}
\DeclarePairedDelimiter{\ceil}{\lceil}{\rceil}
\DeclarePairedDelimiter{\floor}{\lfloor}{\rfloor}
\newcolumntype{Y}{>{\raggedright}X}
\newtheorem{mythm}{Theorem}
\newtheorem{myprop}{Property}
\newtheorem{mypropo}{Proposition}
\author[1]{Timothée Goubault de Brugière }
\author[2]{Simon Martiel}
\affil[1]{Université de Lorraine, CNRS, Inria, LORIA, F-54000 Nancy, France}
\affil[2]{Atos Quantum Lab, Les Clayes-sous-bois, France}
\date{\today}
\title{Shallower CNOT circuits on realistic quantum hardware}
\begin{document}

\maketitle

\begin{abstract}

We focus on the depth optimization of CNOT circuits on hardwares with limited connectivity. We adapt the algorithm from Kutin \textit{et al.} \cite{DBLP:journals/cjtcs/KutinMS07} that implements any $n$-qubit CNOT circuit in depth at most $5n$ on a Linear Nearest Neighbour (LNN) architecture. Our proposal is a block version of Kutin \textit{et al.}'s algorithm that is scalable with the number of interactions available in the hardware: the more interactions we have the less the depth. We derive better theoretical upper bounds and we provide a simple implementation of the algorithm. Overall, we achieve better depth complexity for CNOT circuits on some realistic quantum hardware like a grid or a ladder. For instance the execution of a $n$-qubit CNOT circuit on a grid can be done in depth $4n$.

\end{abstract}

\section{Introduction}

Quantum decoherence is a major obstacle to the scaling of quantum computing. It is very hard to maintain the qubits isolated from the environment during a calculation and once the qubits interact with external elements the result of the current calculation is lost. The \textit{decoherence time} is used to designate this limited amount of available computing time. This time is given by the hardware and adds an additional constraint at the software level and more precisely during compilation: the sequence of instructions generated by the compiler for the machine must be able to be executed in a sufficiently short time. In the \textit{quantum circuit} model, this is equivalent to say that the \textit{depth} of the circuit must be as low as possible.

In this article we tackle the depth optimization of a specific class of quantum operators, namely the linear reversible operators or equivalently the CNOT circuits. CNOT circuits are themselves a subclass of the so-called Clifford circuits which play a major role in many different area of quantum computation such as quantum error correction \cite{gottesman1997stabilizer}, randomized benchmarking protocols \cite{knill2008randomized,magesan2011scalable}, quantum state distillation \cite{bravyi2005universal, knill2005quantum}. CNOT circuits are also used in other classes of quantum circuits, for instance in the synthesis of phase polynomials \cite{amy2018controlled}. More directly, the optimization of CNOT circuits has also been useful for the optimization of more general quantum circuits like arithmetic circuits \cite{de2021gaussian,de2021reducing}. 

We take into account some architectural constraint between the qubits. The CNOT gate is a two-qubit gate, if two qubits are not close enough in the hardware their interactions is not physically achievable and a CNOT gate between these two qubits cannot be executed. This limits the pair of qubits on which one can apply a CNOT gate and adds even more constraints to the compiler. In an ideal case, all the qubits are connected: then we talk about \textit{full connectivity} or \textit{all-to-all connectivity}, otherwise we talk about \textit{partial} or \textit{constrained} connectivity between the qubits.

The optimization of CNOT circuits have attracted a lot of attention in recent years. Two metrics are generally used to evaluate the cost of running a CNOT circuit: its size or its depth. There are also two types of connectivity: full connectivity and partial connectivity. In total, this gives four cases to be treated. If we can find recent works that optimize the size in full connectivity \cite{de2021gaussian,de2021decoding} and in partial connectivity \cite{de2021decoding,tang2020efficient,DBLP:journals/qic/KissingerG20,nash2020quantum}, and works that optimize the depth in full connectivity \cite{de2021reducing,maslov2022depth,jiang2020optimal}, nothing to our knowledge has been recently proposed to optimize the depth of CNOT circuits with architectural constraints. As far as we know, only two 15-year old works proposed implementations of CNOT circuits in the specific case of a Linear Nearest Neighbour (LNN) architecture \cite{DBLP:journals/cjtcs/KutinMS07,maslov2007linear}. The main result is that $n$-qubit CNOT circuits can be implemented in depth at most $5n$ for the LNN architecture. No improvement of this result nor extensions to other architectures were proposed since.

We propose a block generalization of the algorithm proposed in \cite{DBLP:journals/cjtcs/KutinMS07}. Our algorithm works for any architecture where the qubits can be packed into groups of equal size such that the groups are arranged as a line. So for instance with groups of size 2 our algorithm can treat the case of a ladder like the architecture IBM QX5. With groups of size 4 we can deal with the grid. While not being universal for any qubit connectivity, our algorithm is versatile enough to work for realistic quantum architectures. 

We show that the algorithm skeleton does not depend on a specific architecture or the group size. Our algorithm consists in a series of small problems involving boolean matrices to zero: these problems are the atoms of our algorithm and this is where the block size and the actual architecture specify the constraints with which we have to solve them. We propose several strategies to solve these problems for different problem sizes and different architectures. Overall, we show that any $n$-qubit CNOT circuit can be executed in depth at most $4n$ in the case of the ladder. For the grid, a depth of $4n$ is sufficient and a depth of $15n/4$ is enough if we add the diagonals. 

The structure of the article is a follows: in Section~\ref{sec::background} we give a brief background about CNOT circuits synthesis and we give a tuned formulation of Kutin \textit{et al.}'s algorithm proposed in \cite{DBLP:journals/cjtcs/KutinMS07} for the LNN architecture. Then in Section~\ref{sec::block} we propose our block extension of Kutin \textit{et al.}'s algorithm. We detail the general structure of the algorithm, independent of the architecture and the block size, and the subproblems we have to solve. Then in Section~\ref{sec::practical} we give several ways to solve these subproblems for different cases of block sizes and architectures. We conclude in Section~\ref{sec::conclusion}.

\section{Background and Kutin \textit{et al.}'s algorithm} \label{sec::background}

\subsection{Background}

The CNOT gate is a classical reversible operator. It applies a NOT gate on a \textit{target} qubit if the value of a \textit{control} qubit is True. This is equivalent to writing 
\[ CNOT(x_1, x_2) = (x_1, x_1 \oplus x_2) \]
where $x_1$, resp. $x_2$, is the logical input value of the control, resp. target, qubit and $\oplus$ is the XOR operator. 

By extension any CNOT circuit on $n$ qubits applied to a bitstring $x = (x_1, \hdots, x_n)$ outputs a bitstring $y = (y_1, \hdots, y_n)$ where each $y_i$ is a linear combination of the $x_i$'s. In other words, 
\[ y = Ax = \begin{bmatrix} A_{11} x_1 \oplus A_{12} x_2 \oplus \hdots \oplus A_{1n} x_n \\ \vdots \\ A_{n1} x_1 \oplus A_{n2} x_2 \oplus \hdots \oplus A_{nn} x_n \end{bmatrix}. \]

$A \in \mathbb{F}_2^{n \times n}$ completely characterizes the functionality of the CNOT circuit. By reversibility of the operator, $A$ is necessarily invertible. Given a CNOT circuit implementing an operator $A$, outputting a bitstring $y=Ax$, executing an additional CNOT gate with control $i$ and target $j$ will perform the operation 
\[ CNOT(y_i, y_j) = (y_i, y_i \oplus y_j) = A'x \]
where $A'$ is given from $A$ by adding the row $i$ to the row $j$. We write $A' = E_{ij}A$ where $E_{ij} = I \oplus e_{ji}$ and $e_{ji}$ is zero everywhere expect in the entry $(j,i)$, and we note that $E_{ij}^{-1} = E_{ij}$.

The simulation of CNOT circuits is therefore efficient and its optimization affordable for a compiler. We just showed an equivalence between applying a CNOT gate and performing an elementary row operation on the matrix operator $A$. If one finds a suitable sequence of $N$ row operations $(E_{i_k, j_k})_k$ such that 
\[ \left( E_{i_N,j_N} \times E_{i_{N-1}, j_{N-1}} \times \hdots \times E_{i_1, j_1} \right) A = I \] 
then 
\[ A = \prod_{k=1}^N E_{i_k, j_k} \]
and one gets a direct implementation of $A$ as a CNOT circuit by reading the different elementary row operations. Such decomposition is always possible, for instance any Gaussian elimination algorithm works.

To summarize the synthesis of a CNOT circuit is equivalent to reducing an invertible boolean matrix to the identity with the use of elementary row operations. One is interested in an \textit{efficient} algorithm to do such reduction. To evaluate the efficiency of a synthesis algorithm, two metrics are used: 
\begin{itemize}
    \item the number of CNOT gates in the circuit, in other words the number of row operations used,
    \item the depth of the circuit. This corresponds to the number of time steps needed to execute the circuit given that two gates that act on distinct qubits can be executed simultaneously.
\end{itemize}

Moreover, the CNOT is a two-qubit gate, and therefore requires the interactions between two qubits to be executed. This might not be always physically possible to perform such interactions due to hardware constraints. We talk about graph connectivity to encode the available interactions: the nodes are the qubits and an edge indicates that the two qubits can interact and that a CNOT gate can be executed. When the graph is complete, all qubits are connected and we have a full qubit connectivity. Otherwise, the connectivity is said to be partial. We give some examples of realistic and existing qubit connectivities in Figure~\ref{fig::connectivities}.

The first algorithm improving the Gaussian elimination algorithm was the Patel-Markov-Hayes (PMH) algorithm \cite{patel2008optimal}. It works in the case of a full qubit connectivity and generates circuits of size $O(n^2/\log_2(n))$ where $n$ is the number of qubits. Then several algorithms were proposed in recent years to improve the PMH algorithm \cite{de2021gaussian,de2021decoding} and extensions to partial connectivities were also proposed \cite{de2021decoding,tang2020efficient,DBLP:journals/qic/KissingerG20,nash2020quantum}. 

For the depth optimization, surprisingly, the first works treated directly the case of a LNN architecture \cite{DBLP:journals/cjtcs/KutinMS07,maslov2007linear}. The main result is that $n$-qubit CNOT circuit can be executed in depth at most $5n$ \cite{DBLP:journals/cjtcs/KutinMS07}. More surprisingly, to our knowledge, no other work was proposed to either improve the complexity in the LNN case or to extend it to other connectivities. For a full qubit connectivity, several algorithms were proposed in recent years \cite{de2021reducing,maslov2022depth,jiang2020optimal}, notably the asymptotic optimum of $O(n/\log_2(n))$ is achievable \cite{jiang2020optimal}. 

For completeness and for clarity, we now detail Kutin \textit{et al.}'s algorithm achieving a depth of $5n$ on a line. Our block extension is natural with a suitable formulation of this algorithm.

\begin{figure}
\begin{subfigure}[b]{0.45\textwidth}
    \centering
    \includegraphics[scale=0.35]{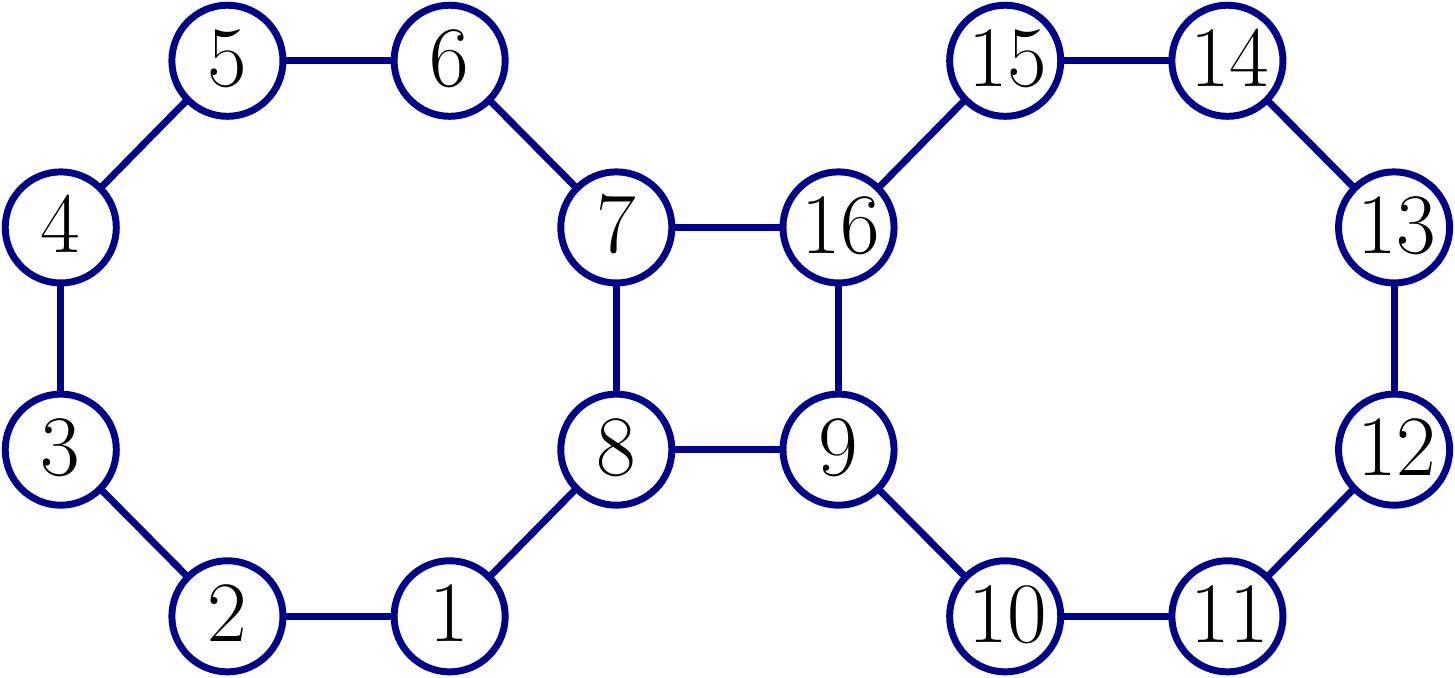}
    \caption{Rigetti 16Q-Aspen}
\end{subfigure}
~
\begin{subfigure}[b]{0.45\textwidth}
    \centering
    \includegraphics[scale=0.35]{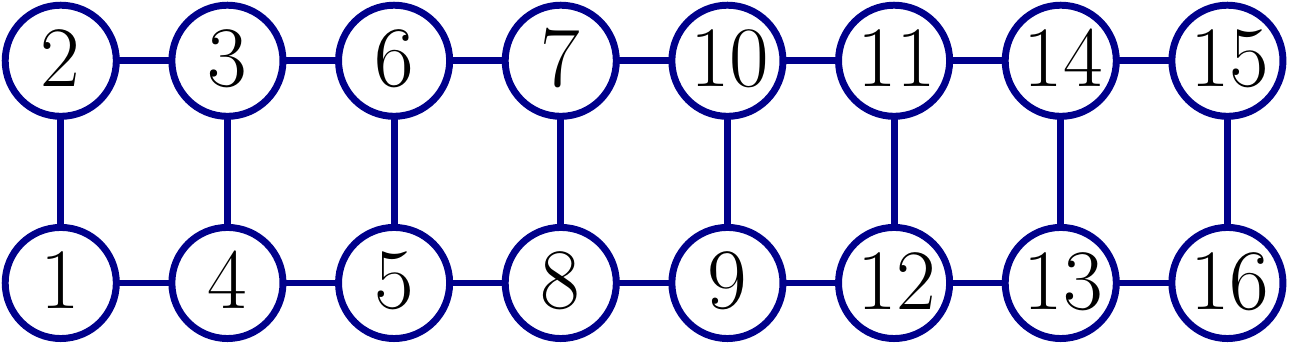}
    \caption{IBM QX5}
\end{subfigure}

\begin{subfigure}[b]{0.45\textwidth}
    \centering
    \includegraphics[scale=0.35]{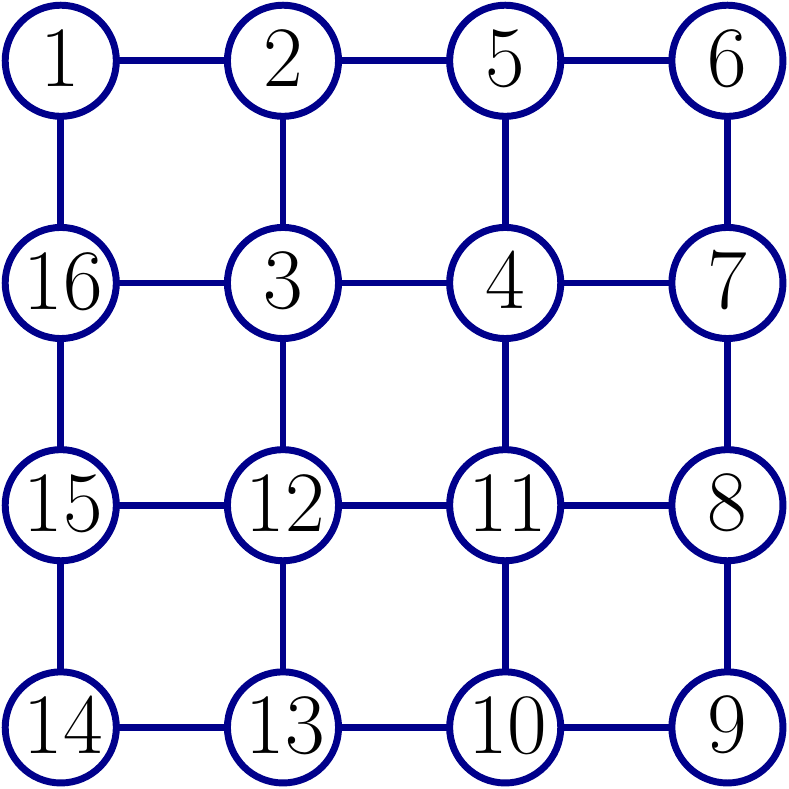}
    \caption{4 $\times$ 4 grid}
\end{subfigure}
~
\begin{subfigure}[b]{0.45\textwidth}
    \centering
    \includegraphics[scale=0.35]{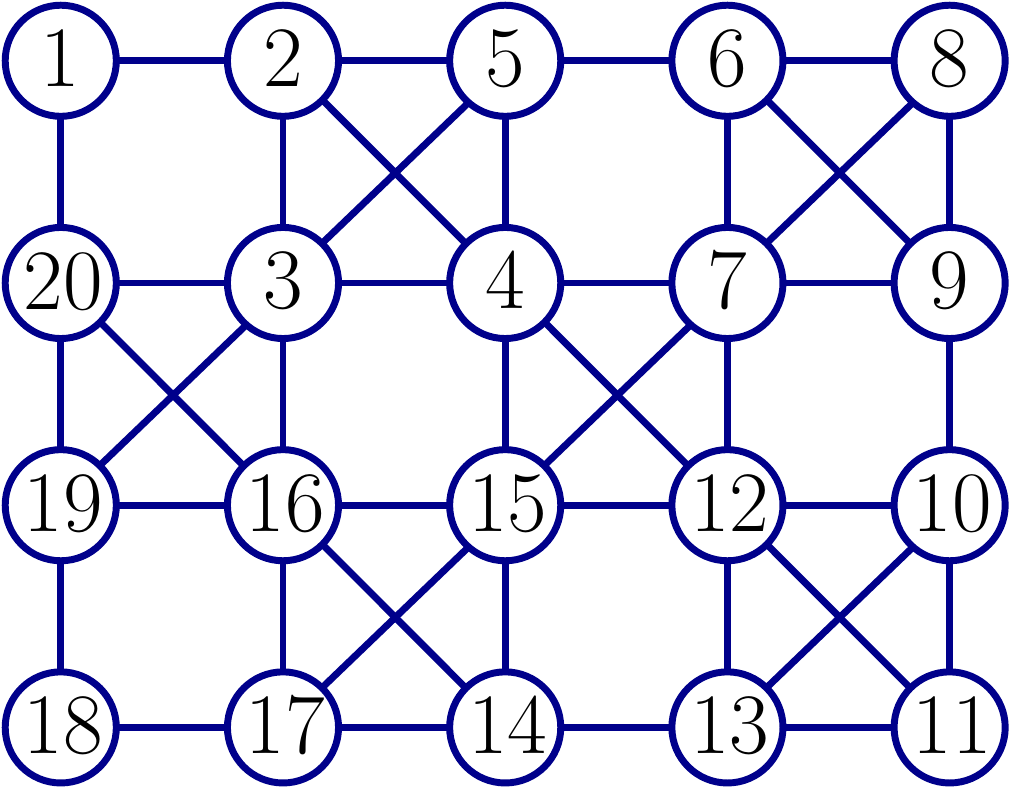}
    \caption{IBM QX20 Tokyo}
\end{subfigure}
\caption{Examples of qubit connectivity graphs from existing architectures (taken from \cite{de2021decoding}).}
\label{fig::connectivities}
\end{figure}

\subsection{Kutin \textit{et al.}'s algorithm}

In \cite{DBLP:journals/cjtcs/KutinMS07}, an algorithm for synthesizing any $n$-qubit CNOT circuit in depth at most $5n$ for the LNN architecture has been proposed. We propose a slightly reformulated version of this algorithm. The algorithm consists in two parts: 
\begin{itemize}
    \item first, reduce the operator $A$ to a north-west triangular operator $B$, i.e, such that $B[i,j] = 0$ if $i+j > n-1$ (the indices start at $0$).
    \item secondly, reduce $B$ to the identity operator. 
\end{itemize}

In both steps, each qubit/row of the matrix will be given a label and the algorithm will consist in sorting the labels while maintaining some invariants. When the labels are sorted, the invariants impose the expected result: in the first step $A$ will be north-west triangular, in the second step $B$ will be the identity operator. 

First, write 
\[ A = UPL \]

where $U$ is upper triangular, $L$ is lower triangular and $P$ is a permutation matrix. Such decomposition is always possible with a variant of the Gaussian elimination algorithm whose pseudo-code is given in Algorithm~\ref{PseudoCode_LU}. Let $J_n$ be the exchange matrix of size $n$, write 
\[ A = UPJ_nJ_nL = U  P'  W\]
where $P' = PJ_n$ and $W$ is a north-west triangular matrix. One can check that north-west triangular matrices are stable by left-multiplication with upper triangular matrices. 
Therefore, reducing $A$ to a north-west triangular matrix is equivalent to reducing an upper triangular matrix with columns permuted into an upper triangular matrix. In other words we want to do 
\[ U  P' \to U' \]
for some upper-triangular matrix $U'$, during the first step of the algorithm and 
\[ U'W  \to I \]
during the second step ($UW$ being north-west triangular).

\subsubsection{From upper with columns permuted to upper triangular}
Given $U$ and $P$, we label the rows of $UP$ as the following: row $i$ has label $j$ if $P[i,j]=1$. 
Then one can check that matrix $A=UP$ satisfies the following property that will be our invariant throughout the synthesis: 
\\ \ \\ \ \\
\fbox{\begin{minipage}{35em}
\textbf{Invariant n\textdegree1:} for each row $i$ of $A$ with label $k$, for each row $j>i$, $A[j,k]=0$. \hspace*{2.4cm} Also, we always have $A[i,k]=1$.
\end{minipage}} 
\\ \ \\ \ \\

\begin{figure}[h]
    \[
        \left(
            \begin{array}{ccccc}
                1 & 1 & \rc{col-2-r} 1 & 1 & 0\\
                1 & 1 & \gc{col-2} 0 & \rc{col-3-r} 1 & 0\\
                1 & \rc{col-1-r} 1 & 0 & \gc{col-3} 0 & 1\\
                1 & \gc{col-1} 0 & 0 & 0 & \rc{col-4-r} 1\\
                \rc{col-0-r} 1 & \egc{col-1} 0 & \egc{col-2} 0 & \egc{col-3} 0 & \gc{col-4} \egc{col-4} 0
            \end{array}
        \right)\begin{array}{c}
            3\\4\\2\\5\\1
        \end{array}
    \]
    \caption{A permuted upper-triangular matrix and its row labeling as specified by \textbf{Invariant n\textdegree1}. The first property of the invariant is depicted in green, while the second is depicted in red.}\label{fig:invariant_1}
\end{figure}

Notice that if the labels are sorted, i.e row $i$ has label $i$, and the invariant holds, then the matrix is upper-triangular (i.e. $P$ is trivial). 
Figure \ref{fig:invariant_1} depicts a matrix $A=UP$ and its labeling.

Let $A$ be the operator we manipulate during the synthesis with elementary row operations. Initially $A = UP$. We now show that given two adjacent qubits $i, i+1$, 
one can always apply a linear reversible operator between the two qubits such that the two labels can be swapped while maintaining the invariant on $A$. 

First, note that both rows verify $A[i,k] = A[i+1,k] = 0$ for any label $k$ of the rows $1 \hdots i-1$. 
It is clear that this property will remain true after any linear reversible operation between those two rows. 
The values $A[i,k], A[i+1,k]$ for any label $k$ of the rows $k > i+1$ are arbitrary and do not participate in the truth value of the invariant. 
So now it should be clear that to maintain the invariant we only need to focus on the $2 \times 2$ submatrix $B = A[[i,i+1], [k,k']]$ where $k$ is the label of row $i$ and $k'$ the label of row $i+1$. 
Because of the invariant $B$ can only have two values: 
\[ B = \begin{bmatrix} 1 & 1 \\ 0 & 1 \end{bmatrix} \text{ or } B = \begin{bmatrix} 1 & 0 \\ 0 & 1 \end{bmatrix}. \]

To swap labels $k$ and $k'$, we have to apply a 2-qubit linear reversible operator $C$ such that 
\[ CB = \begin{bmatrix} \star & 1 \\ 1 & 0 \end{bmatrix} \]
where $\star$ can be either $0$ or $1$. With $CB$ of this shape, we can assign label $k$ to row $i+1$ and label $k'$ to row $i$ while maintaining $A[i,k'] = 1, A[i+1,k]=1$ and $A[i+1,k'] = 0$. 

The choice of $C$ is simple: 
\begin{itemize} 
    \item if $B = \begin{bmatrix} 1 & 1 \\ 0 & 1 \end{bmatrix}$ then $C = E_{i,i+1}$ works, one CNOT is sufficient, 
    \item if $B = \begin{bmatrix} 1 & 0 \\ 0 & 1 \end{bmatrix}$ then $C = E_{i,i+1}E_{i+1,i}$ works, and two CNOT are sufficient. 
\end{itemize}

To summarize, given a matrix $A$ with labels on the rows and that satisfies invariant n\textdegree 1, we have a procedure that modifies $A$ such that we can swap the labels of two adjacent rows while maintaining the invariant. This procedure, that we call a \textit{box} (to follow the terminology of \cite{DBLP:journals/cjtcs/KutinMS07}) requires at most $2$ CNOT gates. Once the labels are sorted, the invariant guarantees us that $A$ will be upper triangular. What remains to do is to propose a quantum circuit made of boxes that guarantees that the labels are sorted after its execution. We want this circuit to be the shallowest possible. What is proposed in \cite{DBLP:journals/cjtcs/KutinMS07} is a LNN sorting network, an example on $7$ bits is given in Figure~\ref{sorting_network}. 
The circuit, as a box-based circuit, is of depth at most $n$. Therefore, given that each box is of depth at most $2$, the CNOT-based circuit is of depth at most $2n$. 

\begin{figure}
\center
\includegraphics[scale=1]{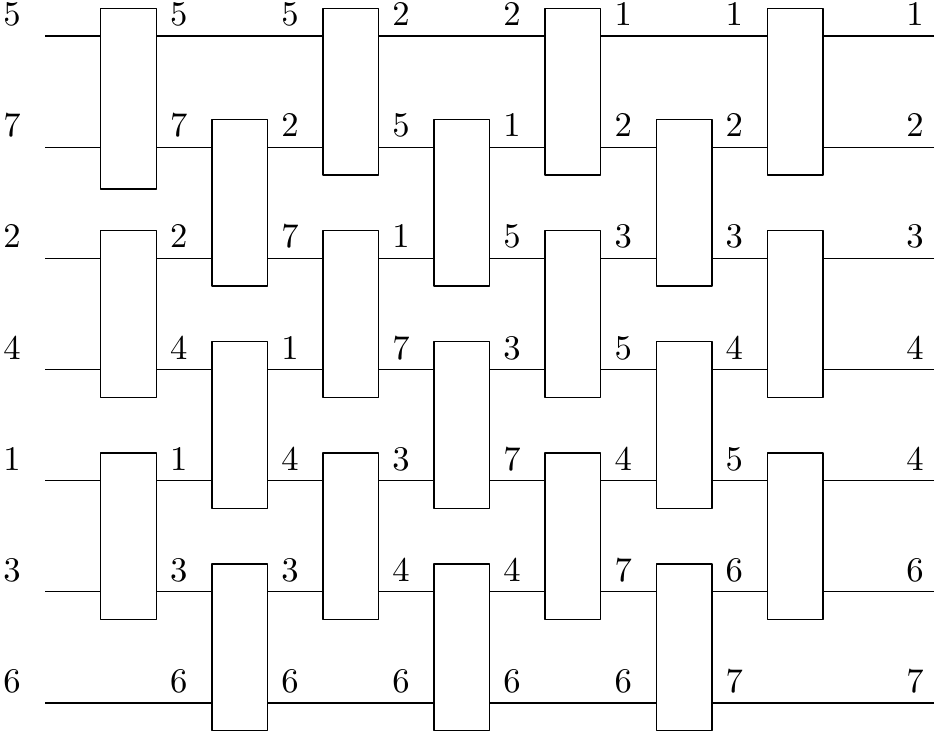}
\caption{Sorting network on $7$ bits adapted for an LNN architecture. Each box is a conditional SWAP. The network, as a box-based circuit, is of depth $n$ to sort $n$ values.}
\label{sorting_network}
\end{figure}

\subsubsection{From northwest triangular to the identity}

The principle is exactly the same than in the first step, except this time we have to reduce a north-west triangular matrix $A$. For this step, we use the following invariant: 
\\ \ \\ \ \\
\fbox{\begin{minipage}{35em}
\textbf{Invariant n\textdegree2:} for each row $i$ of $A$ with label $k$, for each row $j$, if $j>i$ then \hspace*{2.4cm} $A[j,k]=0$ ; if $j < i$ and row $j$ has label $k' < k$ then $A[j,k]=0$. \hspace*{2.4cm} Also, we always have $U[i,k]=1$.
\end{minipage}} 
\\ \ \\ \ \\

\begin{figure}[h]
    \[
        \left(
            \begin{array}{ccccc}
                1 & 0 & 1 & 1 & \rc{col-42-r}1\\
                0 & 0 & 0 & \rc{col-32-r}1 &\gc{col-42}  0\\
                0 & 1 & \rc{col-22-r}1 & \gc{col-32} 0 & 0\\
                0 & \rc{col-12-r}1 &\gc{col-22}  0 & 0 & 0\\
                \rc{col-02-r}1 & \gc{col-12} \egc{col-12} 0 & \egc{col-22} 0 & \tikzmarkend{col-32} 0 &\egc{col-42}  0
                \end{array}
        \right)\begin{array}{c}
            5\\4\\3\\2\\1
        \end{array}~~~~ \left(\begin{array}{ccccc}
            0 & 0 & 0 &\rc{col-43-r} 1 & \pc{p-51} 0\\
            1 & 0 & 1 & \gc{g-43}0 & \rc{col-53-r}1\\
            0 & \rc{col-23-r}1 & \pc{p-31}0 & 0 & \gc{g-53}0\\
            0 & \gc{g-23}0 & \rc{col-33-r}1 & 0 & 0\\
            \rc{col-13-r}1 & \egc{g-23}0 & \gc{g-33}\egc{g-33}0 & \egc{g-43}0 & \egc{g-53}0
            \end{array}\right)\begin{array}{c}
                4\\5\\2\\3\\1
            \end{array}
    \]
    \caption{(left) A north-west triangular matrix and its initial row labeling as specified by \textbf{Invariant n\textdegree2}. (right) Another matrix and its labeling satisfying the invariant.
    The first property of the invariant is depicted in green, the second in purple, and the last is depicted in red. }\label{fig:invariant_2}
\end{figure}

Initially, $A$ is northwest triangular and we assign to row $i$ the label $n-i+1$. One can check that the property is satisfied. Figure \ref{fig:invariant_2} depicts such a matrix $A$ and its labeling.

Given a pair of qubits $i, i+1$, we can again swap the labels with some modifications on $A$. Again, for the same reasons that in the first step, note that we only have to focus on the submatrix $B = A[[i,i+1], [k,k']]$. If $k < k'$ then necessarily $B = \begin{bmatrix} 1 & 0 \\ 0 & 1 \end{bmatrix}$ and there is nothing to be done. If $k' < k$ then $B$ can still have only two values: 
\[ B = \begin{bmatrix} 1 & 1 \\ 0 & 1 \end{bmatrix} \text{ or } B = \begin{bmatrix} 1 & 0 \\ 0 & 1 \end{bmatrix}. \]

The difference this time compared to the first step is that we impose to apply an operator $C$ such that 
\[ CB = \begin{bmatrix} 0 & 1 \\ 1 & 0 \end{bmatrix} \]
otherwise the invariant would not be preserved. So we have two possibilities for $C$:
\begin{itemize} 
    \item if $B = \begin{bmatrix} 1 & 1 \\ 0 & 1 \end{bmatrix}$ then $C = E_{i+1,i}E_{i,i+1}$ works, two CNOT are sufficient, 
    \item if $B = \begin{bmatrix} 1 & 0 \\ 0 & 1 \end{bmatrix}$ then $C = E_{i,i+1}E_{i+1,i}E_{i,i+1}$ works, and three CNOT are sufficient. 
\end{itemize}

With the same sorting network, we can sort the labels and reduce $A$ to the identity operator. As a box-based operator, the sorting network is of depth $n$. But now each box can be of depth at most $3$ and therefore the second step can be executed in depth at most $3n$. This gives one of the main result of \cite{DBLP:journals/cjtcs/KutinMS07}:

\begin{mythm}
Any $n$-qubit linear reversible operator can be executed in depth at most $5n$ on a LNN architecture.
\end{mythm}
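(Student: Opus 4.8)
The plan is to assemble the theorem from the two reduction steps just described, treating the whole CNOT synthesis as two consecutive ``sort by adjacent transpositions'' tasks and bounding the depth of each.

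First I would fix the decomposition $A = UP'W$, where $UP'$ is an upper-triangular matrix with permuted columns and $W$ is north-west triangular (obtained from the $UPL$ factorization of Algorithm~\ref{PseudoCode_LU} by absorbing two copies of $J_n$). Since applying a CNOT with control $i$, target $j$ amounts to left-multiplying the current matrix by $E_{ij}$, it suffices to exhibit a sequence of adjacent row operations turning $A$ into $I$; reading those operations in reverse yields an LNN CNOT circuit for $A$, whose depth equals that of the sequence. I would build this sequence as the concatenation of a Phase~1 block applied to $UP'W$ and a Phase~2 block.

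For Phase~1 I would put the labels on the rows of $UP'$ as prescribed, check that Invariant~n\textdegree1 holds at the start, and then run the LNN (odd–even transposition) sorting network of Figure~\ref{sorting_network}, which sorts any permutation of $n$ labels in exactly $n$ parallel layers of adjacent compare-swaps. Each compare-swap is realised by a box, i.e. a $2$-qubit operator $C$ on qubits $i,i+1$ built from at most two CNOTs, all acting on the adjacent pair $i,i+1$; since the boxes within one layer act on disjoint pairs they run in parallel, so each layer has depth at most $2$ and Phase~1 has depth at most $2n$. When the network terminates the labels are sorted, so by Invariant~n\textdegree1 the transformed left factor is genuinely upper triangular, and since an upper-triangular matrix times a north-west-triangular matrix is north-west triangular, the full matrix obtained after Phase~1 is north-west triangular. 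For Phase~2 I would relabel that matrix by assigning label $n-i+1$ to row $i$, verify Invariant~n\textdegree2, and rerun the same $n$-layer network; the only change is that a box now costs at most three CNOTs on the pair $i,i+1$, giving layer depth at most $3$ and Phase~2 depth at most $3n$. Sorted labels together with Invariant~n\textdegree2 force the result to be the identity.

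Concatenating the two blocks gives a sequence of adjacent row operations of total depth at most $2n+3n=5n$ reducing $A$ to $I$, hence an LNN CNOT circuit for $A$ of depth at most $5n$. The part demanding the most care is not the depth count but verifying that each of the four box cases preserves its invariant — in particular, in Phase~2, checking that forcing $CB=\begin{bmatrix}0&1\\1&0\end{bmatrix}$ is exactly what keeps the ``$j<i$ with smaller label'' clause intact, while the arbitrary entries in the later columns are legitimately ignored; this, together with the standard fact that the odd–even transposition network sorts in $n$ rounds, are the two inputs the argument rests on.
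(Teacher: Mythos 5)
Your proposal is correct and follows the same route as the paper: factor $A=UP'W$, run the odd--even transposition (LNN sorting) network twice — once under Invariant n\textdegree1 with boxes of depth at most $2$ (depth $\le 2n$), once under Invariant n\textdegree2 with boxes of depth at most $3$ (depth $\le 3n$) — and concatenate to get $5n$. The paper's proof of the theorem is exactly this concatenation of the two bounds established in the preceding subsections, so there is nothing further to add.
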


\begin{proof}

Simply concatenate the two steps, the first step can be executed in depth at most $2n$, the second step can be executed in depth at most $3n$, hence the result. 

\end{proof}

\section{A block version of Kutin \textit{et al.}'s algorithm} \label{sec::block}

The main contribution of our article is a block version of Kutin \textit{et al.}'s algorithm. 
For our block version to work, we will need our target qubit topology to verify a simple structure.

Let $n$ be the number of qubits, we assume $n$ is a multiple of some integer $p$ which will be the block size. 
We pose $m=n/p$ the number of blocks. We write $b_1 = [1,2, \hdots, p], b_2 = [p+1, p+2, \hdots, p+p], \hdots, b_m$ the different blocks.
We further assume that each block $b_i$ induces a connected subgraph of the topology. Furthermore, the blocks are connected in line,
meaning that in each block $b_i, i < m$, there exists a qubit that is connected to a qubit in $b_{i+1}$.

\begin{figure}[h]
    \begin{tikzpicture}[baseline=(middle)]
        \coordinate (middle) at (0, 0.5);
        \foreach \n in {1, 2, 3, 4, 5, 6}{
            \draw (\n, 0) node[circle, draw, inner sep=3pt](1-\n){};
            \draw (\n, 1) node[circle, draw, inner sep=3pt](2-\n){};
            \draw (1-\n) -- (2-\n);
            \draw (\n, 1.5) node{\textcolor{blue}{$b_\n$}};
            \draw[draw, thick, color=blue] (0 - 0.3+\n*1, -0.3) rectangle (0.3+\n*1, 1 + 0.3);
        }
        \draw (1-1) -- (1-2) -- (1-3) -- (1-4) -- (1-5) -- (1-6);
        \draw (2-1) -- (2-2) -- (2-3) -- (2-4) -- (2-5) -- (2-6);
        \foreach \n in {1, ..., 5}{
            \draw[thick, blue, <->] (\n+0.3, 0.5) -- (\n+0.7, 0.5);
        }
    \end{tikzpicture}~~~~~~~~\begin{tikzpicture}[baseline=(middle)]
        \coordinate (middle) at (0, 1.5);
        \foreach \x in {0, ..., 5}{
            \foreach \y in {0, ..., 3}{
                \draw (\x, \y) node[circle, draw, inner sep=3pt](\x-\y){};
            }
        }
        \foreach \x in {0, ..., 5}{
            \draw (\x-0) -- (\x-1) -- (\x-2) -- (\x-3);
        }

        \foreach \y in {0, ..., 3}{
            \draw (0-\y)--(1-\y) -- (2-\y) -- (3-\y) -- (4-\y) -- (5-\y);
        }
        \foreach \y in {0, 1}{
            \foreach \x in {0, ..., 2}{
                \draw[draw, thick, color=blue] (0 - 0.3+2*\x, 1+0.3 + 2 * \y) rectangle (1+0.3+2*\x, 0 - 0.3+ 2 * \y);
            }
        }
        \foreach \y in {0, 1}{
            \foreach \x in {0, ..., 1}{
                \draw[thick, blue, <->] (1+2*\x+0.3, 2*\y + 0.5) -- (1+2*\x+0.7, 2*\y + 0.5);
            }
        }
        \draw[thick, blue, <->] ( 4.5, 1.3) -- ( 4.5, 1.7);
        \draw (0.5, 3.5) node{\textcolor{blue}{$b_1$}};
        \draw (2.5, 3.5) node{\textcolor{blue}{$b_2$}};
        \draw (4.5, 3.5) node{\textcolor{blue}{$b_3$}};
        \draw (4.5, -0.5) node{\textcolor{blue}{$b_4$}};
        \draw (2.5, -0.5) node{\textcolor{blue}{$b_5$}};
        \draw (0.5, -0.5) node{\textcolor{blue}{$b_6$}};
    \end{tikzpicture}
    \caption{Two architectures with corresponding possible block layouts. On the left: a ladder with blocks of size $p=2$. On the right: a grid topology with blocks of size $p=2\times 2= 4$. Notice the blue edges that indicates how the blocks will interact during the synthesis.}\label{fig:block_layouts}
\end{figure}
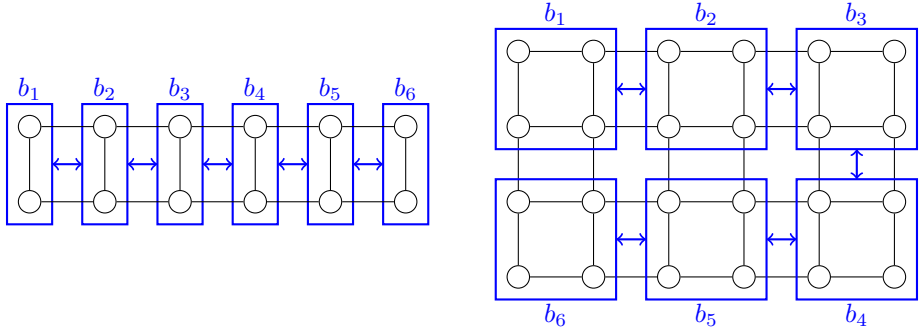





\subsection{Description of the algorithm}

Given the description we made of Kutin \textit{et al.}'s algorithm, the extension to a block version is straightforward. The idea is to perform Kutin \textit{et al.}'s algorithm on the line of blocks. 

We use the notation $A[b_i,:]$ for the natural indexing of the rows $\llbracket (i-1)*p+1, i*p \rrbracket$ of $A$. Similar notations are used to index the columns as well.

The general procedure is the same as in the case of the path graph: we start from $A = UPW$, for an upper-triangular matrix $U$ and a north-west triangular matrix $W$. 
We label the rows of $UP$ similarly according to $P$ and we reduce $UP$ to a block upper triangular matrix $U'$. Then we reduce $U'W$, a block north-west triangular matrix, to the identity. 
During the process we will sort the labels by block while maintaining two invariants.

\subsubsection{From $UP$ to block upper triangular}

During this first step, the invariant we maintain is the following: 
\\ \ \\ \ \\
\fbox{\begin{minipage}{35em}
\textbf{Invariant n\textdegree3:} for each block $b_i$ of $A$ with labels $k=[k_1, \hdots, k_p]$, for each block \hspace*{2.4cm} $b_j, j>i, U[b_j,k]=0$. Also, we have $A[b_i,k]$ invertible.
\end{minipage}} 
\\ \ \\ \ \\

    \begin{figure}[h]
        \[
            \begin{array}{c}
                2 \\ 4\\ 5\\ 1\\3\\6
            \end{array}\left(
            \begin{array}{cccccc}
                0 & \rc{rc1} 1 & 1 & \rc{rc2} 1 & 1 & 1\\
                1 & \rc{rc3} 0 & 1 & \rc{rc4} 1 & 1 & 0\\
                0 & \gc{gc1-i3} 0 & 1 & \gc{gc2-i3} 0 & 1 & 0\\
                1 & 0 & 1 & 0 & 0 & 1\\
                0 & 0 & 1 & 0 & 0 & 0\\
                0 & \egc{gc1-i3} 0 & 0 & \egc{gc2-i3}0 & 0 & 1
            \end{array}
            \right)~~~~\left(
                \begin{array}{cccccc}
                    0 & 1 & 1 & 1 & 1 & 1\\
                    1 & 0 & 1 & 1 & 1 & 0\\
                    \rc{rc1-2} 0 & 0 & 1 & 0 & \rc{rc2-2} 1 & 0\\
                    \rc{rc3-2} 1 & 0 & 1 & 0 & \rc{rc4-2} 0 & 1\\
                    \gc{gc3-i3} 0 & 0 & 1 & 0 & \gc{gc4-i3} 0 & 0\\
                    \egc{gc3-i3}0 & 0 & 0 & 0 & \egc{gc4-i3}0 & 1
                \end{array}
                \right)
        \]
        \caption{Depiction of \textbf{invariant n\textdegree3}. The matrices corresponding to the red elements need to be invertible, while green elements should all be $0$. (left) The two conditions of the invariant depicted for the first block $b_1=[1, 2]$. (right) invariants for the second block $b_2=[3, 4]$.}\label{fig:invariant_3}
    \end{figure}

This new invariant is illustrated in Figure~\ref{fig:invariant_3}. Similarly to \textbf{invariant n\textdegree1}, this invariant holds for any $UP$ with $U$ upper triangular and $P$ a permutation matrix by giving label $j$ to row $i$ if $P[i,j]=1$. 
Moreover, if the labels are block-sorted, i.e each row of $b_i$ has a label in $b_i$, then this specifies a block upper triangular matrix. 

Given two adjacent blocks $b_i, b_{i+1}$ with labels $k, k'$, we show how to assign any $p$-sized subset $k''$ of $k\cup k'$ to block $b_i$ while maintaining the invariant. Let $k''' = k\cup k' \setminus k''$. Similarly to the LNN case, we only need to focus on a particular submatrix, namely a $2p \times 2p$ matrix $B = A[[b_i, b_{i+1}], [k, k']]$. Let's write 
\[ B = \begin{bmatrix} A_1 & A_3 \\ 0 & A_2 \end{bmatrix} \]
where $A_1, A_2, A_3$ are $p \times p$, by \textbf{invariant n\textdegree2} $A_1$ and $A_2$ are invertible, $B$ is full rank, and the lower-left $p\times p$ block is $0$. If we want to assign labels $k''$ to block $b_i$ we are more interested in a column-permuted version of $B$, namely 
\[ B[:,[k'',k''']] = \begin{bmatrix} B_1 & B_2 \\ B_3 & B_4 \end{bmatrix}, \] 
where $B_1, B_2, B_3, B_4$ are arbitrary as long as $B$ is full rank. If one finds a suitable linear reversible operator $C$ such that 
\[ CB[:,[k'',k''']] = \begin{bmatrix} B_5 & B_6 \\ 0 & B_7 \end{bmatrix} \] 
then one can assign labels $k''$ to block $b_i$. $B_6$ is arbitrary. $CB$ is full rank so we have $B_5$ and $B_7$ invertible and the invariant is fully maintained. Any choice for $k''$ works but in our case we want $k''$ to be the $p$ smallest labels among $k,k'$ to perform a sorting. Then using an LNN sorting network on the blocks one can finally block sort the labels and transform $A$ into a block northwest triangular matrix.

Note that we do not need in fact to focus on the columns labeled by $k'''$ of $B$. Our goal was to show that the suitable subblocks of $B$ are still invertible, but in practice we only need to focus on 
\[ A[[b_i, b_{i+1}], k''] = \begin{bmatrix} A_1 \\ A_2 \end{bmatrix} \] 
Because the labels $k''$ can somehow be arbitrary, we cannot conclude anything on $A_1$ and $A_2$ except that they form a matrix of rank $p$. Our goal is to find a $C$ such that 
\[ CA[[b_i, b_{i+1}], k''] = \begin{bmatrix} A_3 \\ 0 \end{bmatrix} \]
with $A_3$ necessarily invertible.

\subsubsection{From block northwest triangular to the identity}

For this second step, each block $b_{n-i+1}$ has initially the labels 
\[ k_i = \llbracket (i-1)\cdot p+1, i\cdot p \rrbracket. \] 
During the second step, the labels will be block sorted but labels within each block label will not be modified. 

The invariant we maintain is the following: 
\\ \ \\ \ \\
\fbox{\begin{minipage}{35em}
\textbf{Invariant n\textdegree4:} for each block $b_i$ of $U$ with labels $k_l$, for each block \hspace*{2.4cm} $b_j, j>i, A[b_j,k_l]=0$. For each block $b_j, j < i$, with label \hspace*{2.4cm}  $k_h$, if $h < l$, then $A[b_j, k_l]=0$. Also, we have $A[b_i,k_l]$ invertible.
\end{minipage}} 
\\ \ \\ \ \\

\begin{figure}[h]
    \[
        \begin{tikzpicture}[baseline=(middle)]
            \coordinate (middle) at (0, 2*0.8-0.15);
            \foreach \n/\y in {1/1, 2/2, 3/3}{
                \draw (0, 0.8*\y) node {$k_\n$}; 
            }
        \end{tikzpicture}\left(
            \begin{array}{cccccc}
                0 & 1 & 1 & 1 & 1 & 0\\
                1 & 0 & 1 & 1 & 0 & 1\\
                1 & 1 & \rc{rc1-3} 1 & \rc{rc2-3} 0 & 0 & 0\\
                0 & 1 & \rc{rc3-3} 0 & \rc{rc4-3} 1 & 0 & 0\\
                1 & 1 & \gc{gc1-3} 0 & \gc{gc2-3} 0 & 0 & 0\\
                0 & 1 & \egc{gc1-3} 0 & \egc{gc2-3}0 & 0 & 0
                \end{array}
        \right)~~~~~~\begin{tikzpicture}[baseline=(middle)]
            \coordinate (middle) at (0, 2*0.8-0.15);
            \foreach \n/\y in {2/1, 1/2, 3/3}{
                \draw (0, 0.8*\y) node {$k_\n$}; 
            }
        \end{tikzpicture}\left(
        \begin{array}{cccccc}
            0 & 1 & 1 & 1 & 1 & 0\\
            1 & 0 & 1 & 1 & 0 & 1\\
            \rc{rc9-3}1 &\rc{rc10-3} 1 & \pc{pc1-3} 0 & \pc{pc2-3} 0 & 0 & 0\\
            \rc{rc11-3}0 &\rc{rc12-3} 1 & \pc{pc3-3} 0 & \pc{pc4-3} 0 & 0 & 0\\
            \gc{gc3-3}0 &\gc{gc4-3} 0 & \rc{rc5-3}1 &\rc{rc6-3} 0 & 0 & 0\\
            \egc{gc3-3}0 &\egc{gc4-3} 0 & \rc{rc7-3}0 &\rc{rc8-3} 1 & 0 & 0
        \end{array}
        \right)
    \]
    \caption{Depiction of \textbf{invariant n\textdegree4}. First property is depicted in green, second in purple, third in red. (left) depiction of the invariant for $i=2$ for a block north-west triangular matrix. (right) depiction of the invariant for $i=2, 3$.} \label{fig:invariant_4}
\end{figure}

Again, one can check that this property is satisfied for a block north-west triangular matrix with the chosen labeling. This is illustrated in Figure~\ref{fig:invariant_4}. Once the labels are block sorted, then the invariant specifies a block-diagonal matrix. 

Given two adjacent blocks $b_i, b_{i+1}$ with labels $k_j, k_{j'}, j > j'$, we show how to swap the blocks of labels while maintaining the invariant. We focus on the $2p \times 2p$ submatrix $B = A[[b_i, b_{i+1}], [k_{j'}, k_{j}]]$. We write 
\[ B = \begin{bmatrix} A_1 & A_3 \\ A_2 & 0 \end{bmatrix} \]
where $A_2, A_3$ are invertible. We want to find a linear reversible operator $C$ such that 
\[ CB = \begin{bmatrix} A_4 & 0 \\ 0 & A_5 \end{bmatrix}. \]

After applying $C$ the two blocks of labels can be swapped, obviously $A_4$ and $A_5$ are invertible and the invariant is preserved. Again we use an LNN sorting network on the blocks such that we eventually reduce the northwest triangular matrix into a block diagonal matrix. 

\subsubsection{From block diagonal to the identity}

This final step is straightforward: we assume we can perform a direct synthesis on each block in parallel that reduces each block to the identity gate. This can be done in depth $O(p)$, so provided the block size is a $O(1)$ this step can be done in depth $O(1)$.

\subsubsection{Summary}

Given an hardware composed of $m$ blocks of $p$ qubits laid out as a line, we have shown that we can perform Kutin \textit{et al.}'s algorithm directly on the blocks. The resulting quantum circuit consists of two sorting networks made of "block-boxes", i.e, linear reversible operators that act on 2 adjacent blocks ($2p$ adjacent qubits). During the first step, those block-boxes transform binary matrices of the form 
\[ \begin{bmatrix} A_1 \\ A_2 \end{bmatrix} \]
to 
\[ \begin{bmatrix} A_3 \\ 0 \end{bmatrix} \]
where $A_3$ is invertible, $A_1$ and $A_2$ are arbitrary. This is our Problem 1: \\

\boxed{\begin{tabular}{>{\bfseries}ll}
  Input    & $\circ$ an integer $p > 0$, \\
           & $\circ$ A full rank $2p \times p$ boolean matrix $B = \begin{bmatrix} B_1 \\ B_2 \end{bmatrix}$, \\
           & $\circ$ a connectivity graph $G$ of size $2p$ giving the available row operations on $B$, \\
& \\
  Problem 1  & Find a sequence of row operations $C$ $G$-compliant such that:  \\
             & $\circ$ $CB = \begin{bmatrix} B_3 \\ 0 \end{bmatrix}$.
\end{tabular}}
\
\\
During the second step, the block-boxes transform binary matrices of the form 
\[ \begin{bmatrix} A_1 & A_3 \\ A_2 & 0 \end{bmatrix} \]
to 
\[ \begin{bmatrix} A_4 & 0 \\ 0 & A_5 \end{bmatrix} \]
where $A_2, A_3, A_4$ and $A_5$ are invertible. $A_1$ is arbitrary. This is our Problem 2: \\ \\
\boxed{\begin{tabular}{>{\bfseries}ll}
  Input    & $\circ$ an integer $p > 0$, \\
           & $\circ$ A $2p \times 2p$ boolean matrix $B = \begin{bmatrix} B_1 & B_3 \\ B_2 & 0 \end{bmatrix}$ with $B_2$ and $B_3$ invertible, \\
           & $\circ$ a connectivity graph $G$ of size $2p$ giving the available row operations on $B$, \\
& \\
  Problem 2  & Find a sequence of row operations $C$ $G$-compliant such that:  \\
             & $\circ$ $CB = \begin{bmatrix} B_4 & 0 \\ 0 & B_5 \end{bmatrix}$.
\end{tabular}}
\ \\
A third step simply consists on the synthesis of linear reversible operators on $p$ qubits. This is our Problem 3.

A pseudo-code is given in Algorithm~\ref{PseudoCode}. A detailed example on $8$ qubits with blocks of size $2$ and a topology of a ladder with diagonals is given in Figure~\ref{example}. For the moment our framework is very generic because we have not proposed algorithms to synthesize shallow block-boxes. This corresponds to the function "ZeroBlock" in our pseudo-code. This will be the subject of the next section. 

Let's now derive a general formula for the total depth. If we write $d_1(p)$, resp. $d_2(p)$, the maximum depth needed to perform the transformations necessary during the first step, resp. the second step, and if we note $d^{*}(p)$ the maximum depth required to do the synthesis of a linear reversible operator on $p$ qubits, then 
\[ d(n) \leq \frac{n}{p} \times \bigg(d_1(p) + d_2(p) \bigg) + d^*(p). \]

The main question now is to propose algorithms for different architectures that will give interesting values for $d_1, d_2, d^*$.

\begin{algorithm}
\caption{A block version of Kutin \textit{et al.}'s algorithm \cite{DBLP:journals/cjtcs/KutinMS07} for a depth-optimized synthesis of CNOT circuits with hardware constraints.}
\label{PseudoCode}
\begin{algorithmic} 
\REQUIRE $n > 0, \; 0 < p < n, \; \; A \in \mathbb{F}_2^{n \times n}$ \tcp{$p$ is the block size}
\ENSURE $C$ is a CNOT-circuit implementing $A$ compliant with the hardware topology
\STATE
\Fn{Synthesis($A$, $n$, $p$, topology)}
{
    \texttt{\\}
    \tcp{Step 1}
    \STATE $U$, labels $\leftarrow$ UPL($A, n$)
    \STATE $C \leftarrow$ SortLabels($U$, labels, $n$, $p$, topology, step=$1$)
    \texttt{\\}
    \texttt{\\}
    \tcp{Step 2}
    \tcp{$::$ means concenation}
    \STATE $NW \leftarrow$ ApplyCircuit($A, C$)
    \STATE labels $\leftarrow [ n-i-1 \text{ for } i = 0 \hdots n-1 ]$
    \STATE $C \leftarrow C :: $ SortLabels($NW$, labels, $p$, topology, step=$2$) 
    \texttt{\\}
    \texttt{\\}
    \tcp{Step 3}
    \tcp{Shift shifts the qubits on which a circuit is applied}
    \FOR{$i = 0 \hdots n//p$}
    \STATE $C \leftarrow C:: $ Shift(DirectSynthesis($NW$[ip:(i+1)p], topo), ip)
    \ENDFOR
    \texttt{\\}
    \texttt{\\}
    \RETURN reverse(C)
}

\Fn{SortLabels($U$, labels, $n$, $p$, topology, step)}
{
    \texttt{\\}
    \STATE \#blocks $\leftarrow n//p$ 
    \STATE $C \leftarrow []$
    \STATE shift $\leftarrow 0$
    \FOR{$i = 0 \hdots$ \#blocks$-1$}
    \FOR{$j = 0 \hdots$ (\#blocks-shift)//2}
    \STATE block1 $\leftarrow \llbracket 2pj + p\times \text{shift}, \; \; \; \; \; \, \, \; 2pj + p\times \text{shift}+p-1 \rrbracket$
    \STATE block2 $\leftarrow \llbracket 2pj + p\times \text{shift}+p, \; 2pj + p\times \text{shift} + 2p-1 \rrbracket$
    \STATE rows $\leftarrow [\text{block1, block2}]$
    \STATE SortTwoBlockLabels($U$, $p$, labels, rows, topology, step=step, $C$)
    \ENDFOR
    \STATE shift $\leftarrow$ (shift $ + 1) \mod 2$ 
    \ENDFOR
    \texttt{\\}
    \texttt{\\}
    \RETURN C
}

\Fn{SortTwoBlockLabels($U$, $p$, labels, rows, topology, step, C)}
{
    \texttt{\\}
    \STATE sorted\_labels $\leftarrow$ sort(labels[rows])
    \STATE $V \leftarrow U[rows, \text{sorted\_labels}]$
    \STATE local\_C $\leftarrow$ ZeroBlock($V$, topology, step=step)
    \FOR{gate in local\_C}
        \STATE $C \leftarrow C:: $ CNOT(rows[control(gate)], rows[target(gate)])
        \STATE $U$[rows[target(gate)], :] $\leftarrow U \oplus U$[rows[control(gate)],:]
    \ENDFOR
    \texttt{\\}
    \STATE labels[rows] $\leftarrow$ sorted\_labels
}

\end{algorithmic}
\end{algorithm}

\begin{algorithm}
\caption{Tuned $LU$ decomposition.}
\label{PseudoCode_LU}
\begin{algorithmic} 
\REQUIRE $n > 0, \; \; \; A \in \mathbb{F}_2^{n \times n}$
\ENSURE $U$ is an upper triangular operator with columns permuted such that $U^{-1}A$ is north-west triangular
\STATE
\Fn{UPL($A$, $n$)}
{
    \texttt{\\}
    \STATE $U \leftarrow I_n$
    \FOR{$i = n-1 \hdots 0$}
        \STATE $pivot = n-1$
        \IF{$A[:,i] != 0$}
            \WHILE{$A[pivot,i]=0$}
                \STATE $pivot \leftarrow pivot-1$
            \ENDWHILE
            \texttt{\\}
            \FOR{$j = 0 \hdots n-1$}
                \IF{$j != pivot$ and $A[j,i]=1$}
                    \STATE $A[j,:] \leftarrow A[j,:] \oplus A[pivot,:]$
                    \STATE $U[:,pivot] \leftarrow U[:,pivot] \oplus U[:,j]$
                \ENDIF
            \ENDFOR
            \FOR{$j = i-1 \hdots 0$}
                \IF{$A[pivot,j]=1$}
                    \STATE $A[:,j] \leftarrow A[:,j] \oplus A[:,i]$
                \ENDIF
            \ENDFOR
        \ENDIF
    \ENDFOR
    \texttt{\\}
    \texttt{\\}
    \STATE labels $\leftarrow [0]*n $
    \FOR{$i=0 \hdots n-1$}
        \STATE $j \leftarrow 0$
        \WHILE{$A[i,j]=0$}
            \STATE $j \leftarrow j+1$
        \ENDWHILE
        \STATE labels[i] $\leftarrow n-j-1$
    \ENDFOR
    \texttt{\\}
    \texttt{\\}
    \STATE inv\_labels $\leftarrow [0]*n$
    \FOR{$i = 0 \hdots n-1$}
        \STATE inv\_labels[labels[$i$]] $\leftarrow i$
    \ENDFOR
    \texttt{\\}
    \texttt{\\}
    \STATE $U \leftarrow U[:,inv\_labels]$
    \texttt{\\}
    \texttt{\\}
    \RETURN $U$, labels
}
\end{algorithmic}
\end{algorithm}

\begin{figure}
\vspace*{-2cm} 
\begin{subfigure}[b]{\textwidth}
\hspace*{-3cm} \includegraphics[scale=0.70]{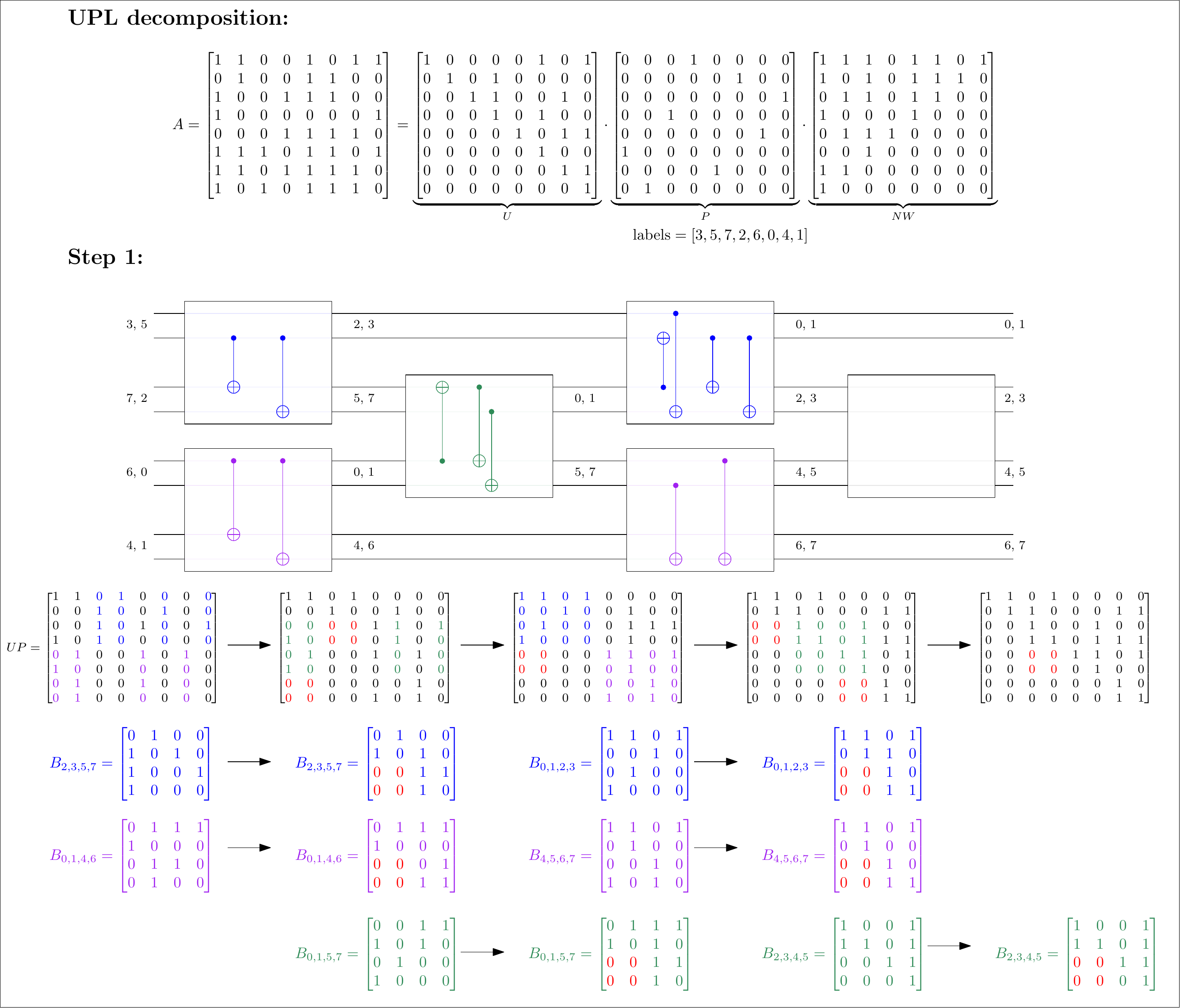}
\caption{UPL decomposition and Step 1. Total depth for Step 1 $= 7$.}
\end{subfigure}

\caption{Illustration of our block algorithm with an example on $8$ qubits and blocks of size $2$ and full qubit connectivity between two blocks. First we present the sorting network, what CNOT gates are applied and how the sorting is progressively done. Then we show directly on the matrix what row operations are done and what submatrices we are considering to determine which operator to apply. Note that we do not necessarily choose the shallowest operators to perform step $1$ or $2$. The final circuit has depth $25$ (in fact $24$ because one CNOT from one box can be merged to another box) while the execution of $A$ on a LNN architecture would require a circuit of depth $38$.}
\label{example}
\end{figure}

\begin{figure}
\ContinuedFloat
\vspace*{-4cm} 
\begin{subfigure}[b]{\textwidth}
\hspace*{-3cm} \includegraphics[scale=0.70]{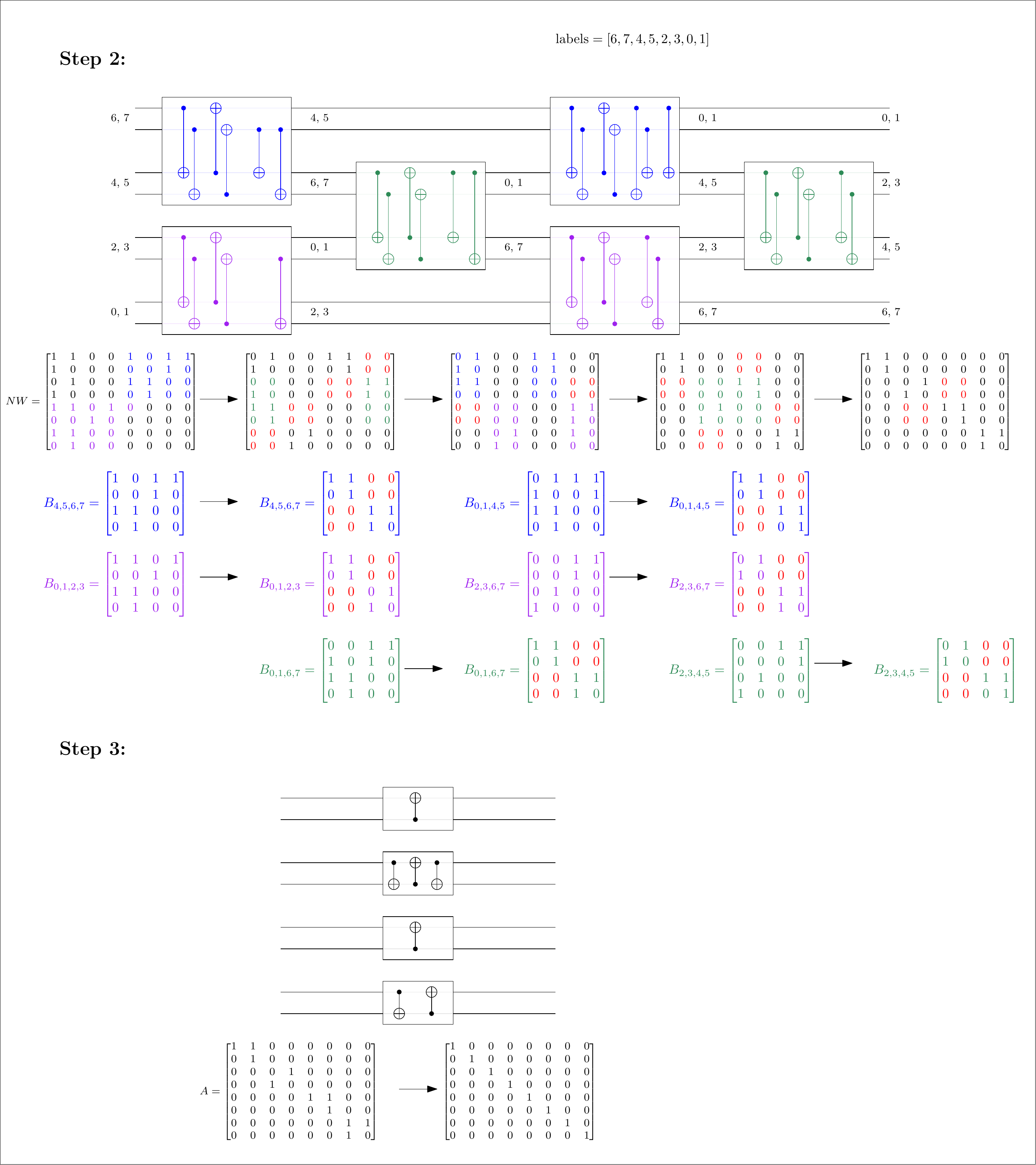}
\caption{Step 2 and Step 3. Total depth for Steps 2 and 3 $= 15 + 3 = 18$.}
\end{subfigure}

\caption{Illustration of our block algorithm with an example on $8$ qubits and blocks of size $2$ and full qubit connectivity between two blocks. First we present the sorting network, what CNOT gates are applied and how the sorting is progressively done. Then we show directly on the matrix what row operations are done and what submatrices we are considering to determine which operator to apply. Note that we do not necessarily choose the shallowest operators to perform step $1$ or $2$. The final circuit has depth $25$ (in fact $24$ because one CNOT from one box can be merged to another box) while the execution of $A$ on a LNN architecture would require a circuit of depth $38$.}
\end{figure}

\section{Practical implementations for different qubit connectivities} \label{sec::practical}

Before proposing strategies, note that our two problems are unchanged after column operations. Either during Problem 1 when working on 
\[ \begin{bmatrix} A_1 \\ A_2 \end{bmatrix} \]
or during Problem 2 with
\[ \begin{bmatrix} A_1 & A_3 \\ A_2 & 0 \end{bmatrix} \]
any column operation on $A_1, A_2$ during Problems 1 and 2 gives another problem whose solutions are exactly the same. One can check that undoing the column operations on 
\[ \begin{bmatrix} A_3 \\ 0 \end{bmatrix} \]
resp.,
\[ \begin{bmatrix} A_4 & 0 \\ 0 & A_5 \end{bmatrix} \]
will not modify the desired structure.

\subsection{Exhaustive search for small blocks}

For small $p$, it might be interesting to try an exhaustive search in order to have the best possible bounds on the depth. The search will consist in a breadth-first search. To perform such search, we need three elements: 
\begin{itemize}
    \item starting roots, these elements need depth $0$ to solve Problem 1 or 2,
    \item a set of all available operations of depth $1$,
    \item a characterization of the set of all elements we need to find.
\end{itemize}

The matrix at a maximum distance from the roots will give the maximum depth required to solve Problem 1 or 2.
The roots for Problem 1 are all matrices of the form 
\[ \begin{bmatrix} A_3 \\ 0 \end{bmatrix} \]
with $A_3$ invertible. Similarly for Problem 2 the roots are all matrices of the form 
\[ \begin{bmatrix} A_4 & 0 \\ 0 & A_5 \end{bmatrix} \]
with $A_4, A_5$ invertible. 

For Problem 1, we need to cover all full rank matrices of size $2p \times p$. For Problem 2, we need to cover all matrices of the form 
\[ \begin{bmatrix} A_1 & A_3 \\ A_2 & 0 \end{bmatrix} \]
with $A_2, A_3$ invertible and $A_1$ is arbitrary. 

Fortunately, we can rely on the invariance of the problems by column operations to reduce the search space. With column operations, we can put our matrices in reduced column-echelon form. Such form is unique for one given matrix. This has two consequences:
\begin{itemize}
    \item first, both Problem 1 and 2 have only one root, namely 
\[ \begin{bmatrix} I_p \\ 0 \end{bmatrix} \] 
for Problem 1 and 
\[ \begin{bmatrix} I_p & 0 \\ 0 & I_p \end{bmatrix} \]
for Problem 2,
    \item secondly, for Problem 1 we only need to cover the set of matrices of size $2p \times p$ in reduced column-echelon form. For Problem 2 we will have to search through the set of matrices of the form 
    \[ \begin{bmatrix} R_1 & R_2 \end{bmatrix} \]
    where $R_1, R_2$ are $2p \times p$ and are in reduced column-echelon form. Note that not all matrices of this form are of interest (the targeted matrices have more structure) but this is the smallest stable set in which we can do the search.
\end{itemize}

The set of available operations can be computed in the following way: 
\begin{itemize}
    \item enumerate all possible matchings of the connectivity graph $G$,
    \item each edge of a matching is a pair of qubits on which two different CNOT gates can be applied, depending on which qubit is the control and the target. For each matching do all possible combinations, i.e, if for each edge we assign a $0$ or a $1$ characterizing which CNOT is applied.
\end{itemize}

The results of this exhaustive search are given in Table~\ref{search}. 
First, note that for Problem 1 we recover the number of full rank matrices under reduced row echelon form. This is equivalent to counting the number of different $p$ dimensional subspaces of the $2p$ dimensional vector space over $GF(2)$. This is known to be the Gaussian binomial coefficient $\binom{2p}{p}_2$ (see, e.g, \cite{andrews1998theory}). 
For Problem 2, all target matrices can be put in the canonical form 

\[ \begin{bmatrix} A & I \\ I & 0 \end{bmatrix} \]

where $A$ is an arbitrary $p \times p$ boolean matrix. 
We recover the fact that there can be $2^{p^2}$ of them. From those maximal depth values, we deduce the complexities summarized in Table~\ref{summary} where we explicit the depth to solve each Problem for each architecture and the total depth required for full operator synthesis. 

Table \ref{table:summary_of_summary} gives an overview of the achieved asymptotic depth complexities for the various explored architectures.
The main results, in our opinion, are the ones for the grid: any CNOT circuit on $n$ qubits can be executed in depth $4n + O(1)$ on a grid and $15n/4 + O(1)$ when we add diagonal interactions. 

We also want to emphasize that when there is a full connectivity between blocks of size $4$, the total depth is $7n/4 + O(1)$, which is better than the $2n$ result given by the extension of Kutin at al.'s to the 
full qubit connectivity \cite{de2021reducing}. 
The best practical method so far synthesizes CNOT circuits in depth $n + o(n)$ \cite{maslov2022depth}. 
This shows that with slightly more connectivity between the qubits than the LNN architecture we are able to significantly reduce the computational depth of CNOT circuits and have close to similar results to 
the full connectivity case.

\begin{table}[h]
    \centering
    \caption{Summary of the depth complexities for the different explored architectures. (*) the enumeration never finished for step 2 of the Grid + diagonals layout, thus we used the result for the step 2 of Grid (which is a subgraph of this layout).}
    \begin{tabular}{llll}
        \toprule
        Architecture & Step 1 & Step 2 & Total depth \\
        \cmidrule(lr){1-1} \cmidrule(lr){2-2} \cmidrule(lr){3-3} \cmidrule(lr){4-4}
        Width-2 ladder              & $2n$   & $2n$      & $4n+O(1)$\\
        Width-2 ladder + diagonals  & $3n/2$ & $2n$      & $7n/2+O(1)$ \\
        Width-3 ladder              & $5n/3$ & $2n$      & $11n/3+O(1)$\\
        Width-3 ladder + diagonals  & $4n/3$ & $5n/3$    & $3n+O(1)$\\
        3-qubit all-to-all          & $n$    & $4n/3$    & $7n/3+O(1)$ \\
        Width-4 ladder              & $3n/2$ & $7n/4$    & $13n/4+O(1)$\\
        Width-4 ladder + diagonals  & $5n/4$ & $5n/4$    & $5n/2+O(1)$ \\
        Grid                        & $7n/4$ & $9n/4$    & $4n+O(1)$\\
        Grid + diagonals            & $3n/2$ & $9n/4$(*) & $15n/4 + O(1)$(*)\\
        4-qubit all-to-all          & $3n/4$ & $n$       & $7n/4+O(1)$\\
        \bottomrule
    \end{tabular}
    
\label{table:summary_of_summary}
\end{table}

\begin{table}[t]
\begin{adjustwidth}{-4cm}{}
\centering
\caption{The table gives, for each architecture, for each problem and for each depth, the number of operators in reduced form that can solve the given problem in the given depth for the given architecture. (*) The brute-force search did not terminate.}

\renewcommand{\arraystretch}{1.05}
\scalebox{0.63}{
\begin{tabular}{cccccrrrrrrrrrrcr}
\toprule
\toprule
Architecture & Block size & Local topology & Problem & \multicolumn{11}{c}{Depth} & & Total \\
\cmidrule(lr){1-1} \cmidrule(lr){2-2} \cmidrule(lr){3-3} \cmidrule(lr){4-4} \cmidrule(lr){5-15} \cmidrule(lr){17-17}
&&&& \multicolumn{1}{c}{} & \multicolumn{1}{r}{$\; \; \; d=0$} & \multicolumn{1}{r}{$\; \; \; d=1$} & \multicolumn{1}{r}{$\; \; \; d=2$} & \multicolumn{1}{r}{$\; \; \; d=3$} & \multicolumn{1}{r}{$\; \; \; d=4$} & \multicolumn{1}{r}{$\; \; \; d=5$} & \multicolumn{1}{r}{$\; \; \; d=6$} & \multicolumn{1}{r}{$\; \; \; d=7$} & \multicolumn{1}{r}{$\; \; \; d=8$} & \multicolumn{1}{r}{$\; \; \; d=9$} \\
&&&& \multicolumn{1}{c}{} & \multicolumn{1}{c}{} & \multicolumn{1}{c}{} & \multicolumn{1}{c}{} & \multicolumn{1}{c}{} & \multicolumn{1}{c}{} & \multicolumn{1}{c}{} & \multicolumn{1}{c}{} & \multicolumn{1}{c}{} &&
\\
\multirow{2}{*}{Width-2 ladder} & \multirow{2}{*}{$2$} & \multirow{2}{*}{\includegraphics[scale=0.5]{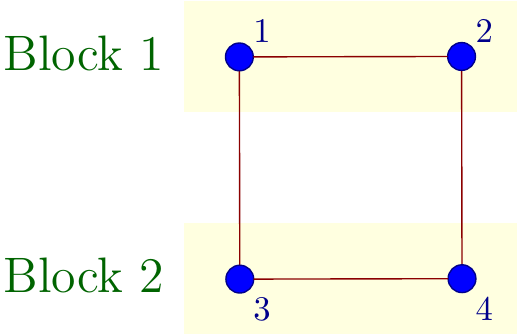}} & 1 && 1 & 3 & 14 & 15 & 2 & & & & & & & 35   \\
                                &&& 2 && 0 & 0 & 1 & 7 & 8 & & & & & & & 16 \\
&&&&&&&&&&&&&& \\ 
&&&&&&&&&&&&&& \\                                 
\\                       
\multirow{2}{*}{Width-2 ladder + diagonals} & \multirow{2}{*}{$2$} & \multirow{2}{*}{\includegraphics[scale=0.5]{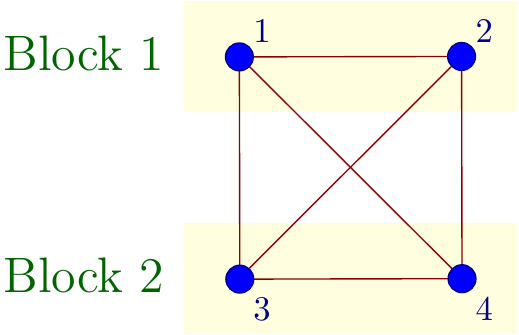}} & 1 && 1 & 6 & 19 & 9 & & & & & & & & 35   \\
                                &&& 2 && 0 & 0 & 2 & 10 & 4 & & & & & & & 16  \\
&&&&&&&&&&&&&& \\ 
&&&&&&&&&&&&&& \\                              
\\                       
\multirow{2}{*}{Width-3 ladder} & \multirow{2}{*}{$3$} & \multirow{2}{*}{\includegraphics[scale=0.5]{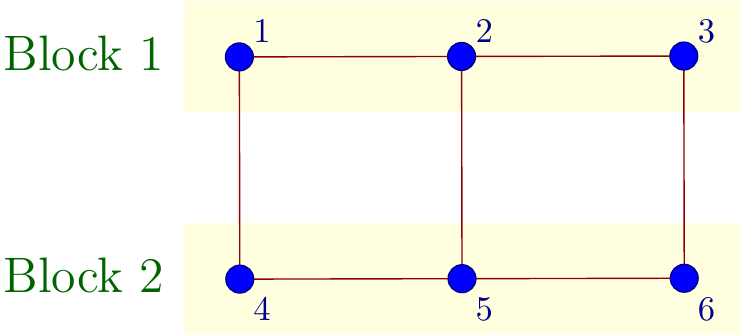}} & 1  && 1 & 7 & 91 & 538 & 736 & 22 & & & & & & 1395  \\
                                &&& 2 && 0 & 0 & 1 & 29 & 206 & 269 & 7 & & & & & 512 \\
&&&&&&&&&&&&&& \\ 
&&&&&&&&&&&&&& \\                               
\\                       
\multirow{2}{*}{Width-3 ladder + diagonals} & \multirow{2}{*}{$3$} & \multirow{2}{*}{\includegraphics[scale=0.5]{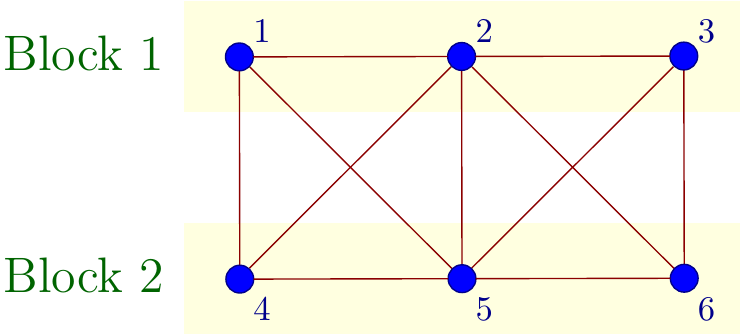}} & 1 && 1 & 21 & 293 & 1042 & 38 &&&&&&& 1395   \\
                                &&& 2 && 0 & 0 & 3 & 99 & 406 & 4 & & & & & & 512 \\
&&&&&&&&&&&&&& \\  
&&&&&&&&&&&&&& \\                              
\\                                
\multirow{2}{*}{3-qubit all-to-all} & \multirow{2}{*}{$3$} & \multirow{2}{*}{\includegraphics[scale=0.5]{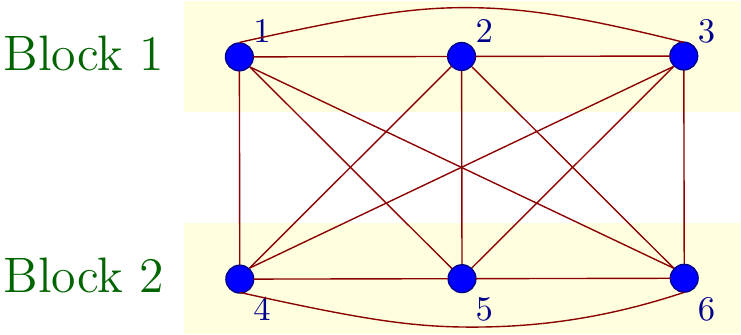}} & 1 && 1 & 33 & 649 & 712 &&&&&&&& 1395  \\
                                &&& 2 && 0 & 0 & 6 & 250 & 256 & & & & & & & 512 \\
&&&&&&&&&&&&&& \\ 
&&&&&&&&&&&&&& \\                               
\\                       
\multirow{2}{*}{Width-4 ladder} & \multirow{2}{*}{$4$} & \multirow{2}{*}{\includegraphics[scale=0.5]{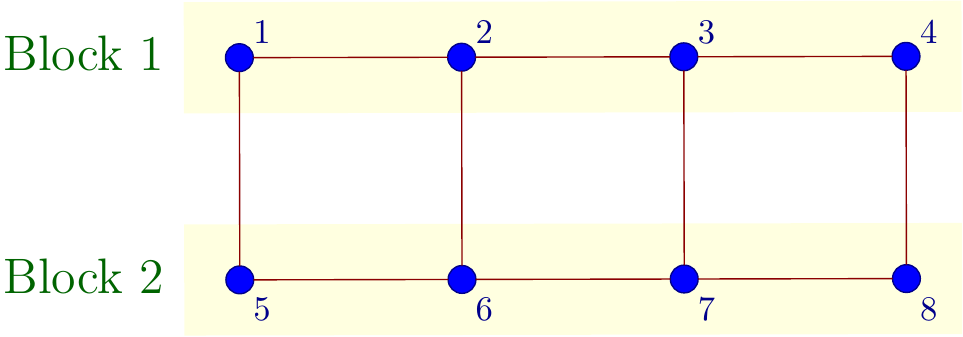}} & 1 && 1 & 15 & 543 & 9746 & 75037 & 110338 & 5107 & & & & & 200787   \\
                                &&& 2 && 0 & 0 & 1 & 117 & 2692 & 20991 & 38695 & 3040 & & & & 65536 \\
&&&&&&&&&&&&&& \\ 
&&&&&&&&&&&&&& \\                              
\\  
\multirow{2}{*}{Width-4 ladder + diagonals} & \multirow{2}{*}{$4$} & \multirow{2}{*}{\includegraphics[scale=0.5]{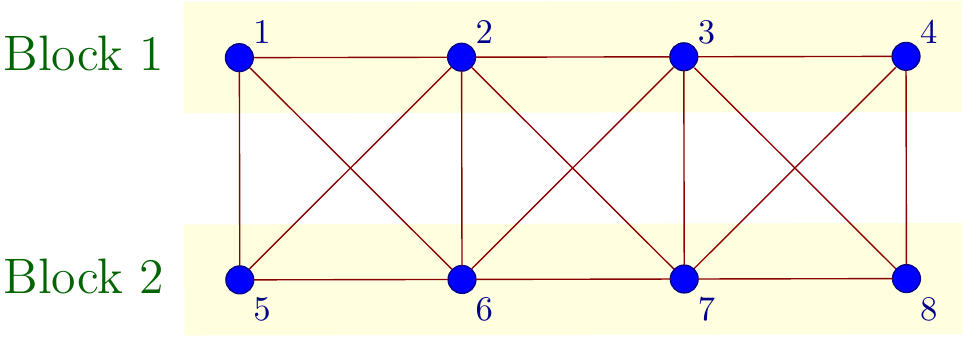}} & 1 && 1 & 70 & 3671 & 76393 & 120118 & 534 & & & & & & 200787  \\
                                &&& 2 && 0 & 0 & 5 & 772 & 21580 & 43179 & & & & & & 65536 \\
&&&&&&&&&&&&&& \\  
&&&&&&&&&&&&&& \\                              
\\  
\multirow{2}{*}{Grid} & \multirow{2}{*}{$4$} & \multirow{2}{*}{\includegraphics[scale=0.5]{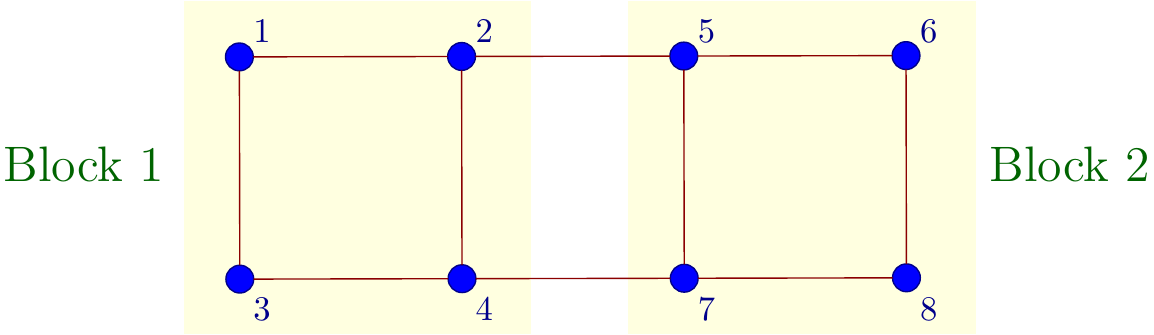}} & 1 && 1 & 3 & 57 & 1873 & 29293 & 136771 & 32733 & 56 & & & & 200787  \\
                                &&& 2 && 0 & 0 & 0 & 0 & 0 & 0 & 25 & 5263 & 55203 & 5045 & & 65536 \\
&&&&&&&&&&&&&& \\  
&&&&&&&&&&&&&& \\                              
\\    
\multirow{2}{*}{Grid + diagonals} & \multirow{2}{*}{$4$} & \multirow{2}{*}{\includegraphics[scale=0.5]{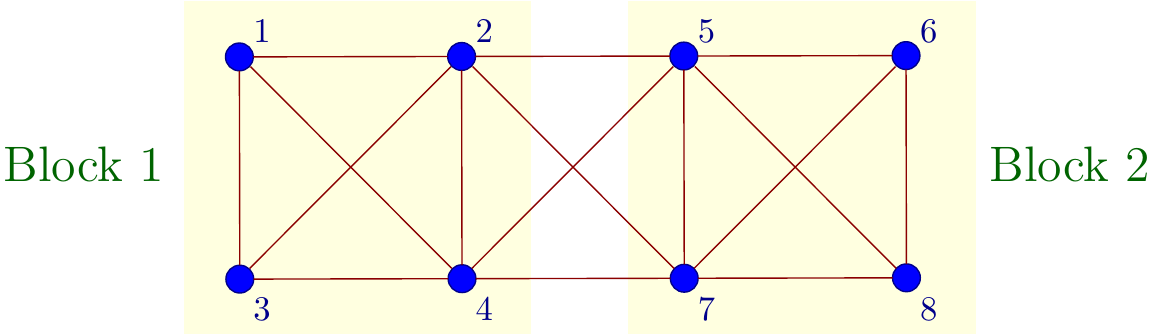}} & 1 && 1 & 6 & 275 & 11426 & 111564 & 75979 & 1536 & & & & & 200787   \\
                                &&& 2 && / & / & / & / & / & / & / & / & / & & & (*). \\
&&&&&&&&&&&&&& \\ 
&&&&&&&&&&&&&& \\                               
\\   
\multirow{2}{*}{4-qubit all-to-all} & \multirow{2}{*}{$4$} & Not shown for clarity. & 1 && 1 & 208 & 42973 & 157605 & & & & & & & & 200787   \\
                                &&& 2 && 0 & 0 & 24 & 14111 & 51401 & & & & & & & 65536 \\
&&&&&&&&&&&&&&                               
\\                       
\bottomrule
\bottomrule
\end{tabular}}

\label{search}
\end{adjustwidth}
\end{table}

\begin{table}[t]
\begin{adjustwidth}{-2cm}{}
\centering
\caption{Summary of the depth complexities for different architectures.}
\renewcommand{\arraystretch}{1.1}
\scalebox{0.7}{
\begin{tabular}{ccccccc}
\toprule
\toprule
Architecture & Block size & Local topology & Depth Problem 1 & Depth Problem 2 & Depth Problem 3 & Total depth \\
\cmidrule(lr){1-1} \cmidrule(lr){2-2} \cmidrule(lr){3-3} \cmidrule(lr){4-4} \cmidrule(lr){5-5} \cmidrule(lr){6-6} \cmidrule(lr){7-7}
\\
Width-2 ladder & $2$ & \includegraphics[scale=0.5]{images/ladder2.pdf} & $2n$ & $2n$ & $O(1)$ & $4n + O(1)$ \\
\\
Width-2 ladder + diagonals & $2$ & \includegraphics[scale=0.5]{images/ladder2_diagonal.pdf} & $3n/2$ & $2n$ & $O(1)$ & $7n/2 + O(1)$ \\
\\
Width-3 ladder & $3$ & \includegraphics[scale=0.5]{images/ladder3.pdf} & $5n/3$ & $2n$ & $O(1)$ & $11n/3 + O(1)$ \\
\\
Width-3 ladder + diagonals & $3$ & \includegraphics[scale=0.5]{images/ladder3_diagonal.pdf} & $4n/3$ & $5n/3$ & $O(1)$ & $3n + O(1)$ \\
\\
3-qubit all-to-all & $3$ & \includegraphics[scale=0.5]{images/all_to_all3.pdf} & $n$ & $4n/3$ & $O(1)$ & $7n/3 + O(1)$ \\
\\
Width-4 ladder & $4$ & \includegraphics[scale=0.5]{images/ladder4.pdf} & $3n/2$ & $7n/4$ & $O(1)$ & $13n/4 + O(1)$ \\
\\
Width-4 ladder + diagonals & $4$ & \includegraphics[scale=0.5]{images/ladder4_diagonal.pdf} & $5n/4$ & $5n/4$ & $O(1)$ & $5n/2 + O(1)$ \\
\\
Grid & $4$ & \includegraphics[scale=0.5]{images/grid.pdf} & $7n/4$ & $9n/4$ & $O(1)$ & $4n+O(1)$ \\
\\
Grid + diagonals & $4$ & \includegraphics[scale=0.5]{images/grid_diagonal.pdf} & $3n/2$ & $9n/4$ & $O(1)$ & $15n/4+O(1)$  \\
\\
4-qubit all-to-all & $4$ & Not shown for clarity. & $3n/4$ & $n$ & $O(1)$ & $7n/4+O(1)$  \\
\\
\bottomrule
\bottomrule
\end{tabular}}

\label{summary}
\end{adjustwidth}
\end{table}





\subsection{Combining block layouts}

Equipped with the asymptotic bounds for step 1 and 2 of our algorithm for different architectures, one can try to combine them in order to improve the overall depth bound in some case.
We propose two possible improvements formalized in the following propositions.

\begin{mypropo}\label{prop:ladder_2}
    In a $2\!\times\! 2L$ grid layout ($n=4L$), any $n-$qubits linear reversible operator can be implemented in depth at most $15n/4 + O(1)$.
\end{mypropo}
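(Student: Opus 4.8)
The plan is to run the block algorithm in the $2\times 2L$ grid with the blocks arranged along the \emph{long} axis, so each block is a $2\times 2$ square of $4$ qubits ($p=4$, $m=L$ blocks). The naive choice would be to plug the generic "Grid" row of Table~\ref{summary} into the master inequality $d(n)\le \frac{n}{p}(d_1(p)+d_2(p))+d^*(p)$, which for the grid gives $d_1=7n/4$, $d_2=9n/4$, hence $4n+O(1)$. To beat this and reach $15n/4+O(1)$ I would exploit the extra flexibility noted just before Proposition~\ref{prop:ladder_2}: within a $2\times 2L$ grid each pair of adjacent $2\times 2$ blocks forms a $2\times 4$ patch, which is exactly a \emph{width-2 ladder segment of length 4} — but we are also free to \emph{re-block}, i.e.\ cut the same $8$ qubits into two blocks of size $4$ in the "ladder" orientation rather than the "square" orientation. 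The key observation is that Step~1 and Step~2 are independent sorting-network passes, so one may choose the block geometry that is best for Step~1 and, if desirable, a different effective local topology for Step~2.

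Concretely, the steps I would carry out are: (1) Fix the block decomposition into $m=L$ squares of $p=4$ qubits; note each square induces a connected (path/square) subgraph and consecutive squares are connected, so the hypotheses of Section~\ref{sec::block} hold. (2) For Step~1 (Problem~1), observe that the relevant local topology on two adjacent squares is the $2\times 4$ grid patch; from Table~\ref{search}/Table~\ref{summary} the "Grid" entry gives $d_1(4)$ corresponding to $7n/4$. (3) For Step~2 (Problem~2) I would argue that one can instead view the two adjacent squares as a width-4 ladder (or otherwise use a better local solver): the point is to pick, among the layouts whose induced two-block graph is a subgraph of the $2\times 4$ grid patch, the one minimizing $d_2$. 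The arithmetic target $15n/4$ forces the split $d_1+d_2$ summing to $15p/4=15$ per block-pair unit; matching this against the table entries — e.g.\ Step~1 of the grid ($7n/4$) together with a Step~2 contribution of $2n$ as for the width-2 ladder viewed as length-$4$ blocks — should be made to add up after normalizing by the block size. (4) Finally, Step~3 is the in-block synthesis of a $4$-qubit linear operator on the $2\times 2$ patch, which is $O(1)$, and one concatenates the three passes as in the proof of the $5n$ theorem, giving total depth $\frac{n}{4}(d_1(4)+d_2(4)) + O(1) = 15n/4 + O(1)$.

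The main obstacle, I expect, is making the "re-blocking between steps" rigorous: one has to check that after Step~1 (performed with square blocks) the matrix $U'W$ is block-northwest-triangular \emph{with respect to whatever block structure Step~2 uses}, and that the sorting-network/box machinery of Invariant~n\textdegree4 still applies when the two blocks being merged sit in a possibly different local topology. If re-blocking turns out to break an invariant, the fallback is to keep the square blocks throughout and instead prove a sharper bound $d_2(4)=2n$ (rather than $9n/4$) for the $2\times 4$ patch directly — i.e.\ exhibit an explicit depth-$8$ (per box) $G$-compliant solution to Problem~2 on the $2\times 4$ grid, beating the generic Grid solver — which again yields $d_1+d_2 \le 7n/4 + 2n = 15n/4$. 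Either way the remaining bookkeeping (verifying $G$-compliance of each box, counting network depth $=m$, and the $O(1)$ tail) is routine given the framework already established.
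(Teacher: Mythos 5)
You have landed on the same combination the paper uses---Step~1 with the $2\times2$-square blocks of the Grid layout ($7n/4$), Step~2 with the width-2-ladder blocks ($2n$)---but the one step that carries the actual content, the ``re-blocking'' between the two passes, is exactly what you leave open: you name it as the main obstacle and offer a fallback rather than an argument. The paper closes it with a short observation that you are missing. After Step~1 the matrix is block north-west triangular for blocks of size $4$; to make it block north-west triangular for blocks of size $2$, the only offending sub-blocks are inside the $4\times4$ anti-diagonal blocks (every coarse zero block is automatically zero at the finer scale, and coarse blocks strictly above the anti-diagonal are unconstrained at both scales). Each anti-diagonal block can be put in the required finer form by row operations confined to the $4$ rows of one $2\times2$ square, which is a connected patch of the hardware; all squares are treated in parallel, so this interlude costs at most $d^*(4)=O(1)$ in depth, and row operations inside a block cannot disturb the coarse zero pattern. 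After that, Invariant n\textdegree4 and the width-2-ladder Step~2 apply verbatim to the size-2 blocking, giving $7n/4+2n+O(1)=15n/4+O(1)$. Without this $O(1)$ conversion layer your proof is incomplete, since the output of Step~1 has no reason to be north-west triangular with respect to the finer blocks.

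Two further points. Your fallback---keep the square blocks and prove $d_2(4)=2n$, i.e.\ a uniform depth-$8$, $G$-compliant solution of Problem~2 on the $2\times4$ grid patch---cannot succeed: the enumeration in Table~\ref{search} is an exact breadth-first search and reports $5045$ canonical Problem-2 instances of depth $9$ for the Grid local topology, so depth $8$ is not always sufficient. Your other suggestion, running Step~2 with width-4-ladder blocks, is also unavailable on this hardware: the width-4-ladder layout requires the $4\times L$ grid (blocks being its columns, with four inter-block edges between consecutive blocks) to be present, whereas the $2\times 2L$ grid has maximum degree $3$, so for $L\ge 3$ those table entries cannot be invoked; the graph isomorphism between a $2\times4$ patch and a $4\times2$ patch does not respect the block partition produced by Step~1. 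So the $O(1)$ re-blocking argument is not an optional refinement---it is the proof.
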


In this hardware setting, one can either divide the grid in $2L$ blocks of size $p=2$ and use the ``Width-2 ladder'' bounds, or divide the grid into $L$ blocks of size $4$ and use the ``Grid'' bounds.
One can combine step 1 of the grid layout ($7n/4$) with step 2 of the Width-2 ladder layout ($2n$).
In order to achieve this, we need to be able to turn a block north-west triangular matrix with block size $4$ into a block north-west triangular matrix with block size $2$.
This step is at most as costly as performing a full block synthesis, and can thus be bounded by $d^*(4)$, which is constant.
Hence, the overall depth bound is of $7n/4 + 2n + O(1) = 15n/4 + O(1)$.

\begin{figure}[h]
    \centering
    \scalebox{0.7}{
    \begin{tikzpicture}
        \newsavebox{\rightpiece}
        \savebox{\rightpiece}{
            \foreach \y in {0, ..., 2}{
                \draw[fill, color=orange!20] (-0.2, -2 * \y + 0.2) rectangle (1.2,  -2 * \y + -1.2);
                \draw[fill, color=orange!20] (2-0.2, -2 * \y +  0.2) rectangle (3.2, -2 * \y +  -1.2);
            }
            \foreach \x in {0, ...,3}{
                \foreach \y in {0, ...,5}{
                    \draw (\x, -\y) node[circle, draw, inner sep=3pt](\x-\y){};
                }
            }
            \foreach \y in {0, ...,5}{
                \draw (0-\y)--(1-\y)--(2-\y)--(3-\y);
            }
            \foreach \x in {0, ...,3}{
                \draw(\x-0) -- (\x-1) -- (\x-2) -- (\x-3) -- (\x-4) -- (\x-5);
            }
            \draw (3-0) .. controls (4,-0.33) and (4,-1.66) .. (3-2);
            \draw (3-1) .. controls (4,-1.33) and (4,-2.66) .. (3-3);
            \draw[dashed] (3-4) .. controls (4,-4.33) and (4,-5.66) .. (3.2,-5.8);
            \draw[dashed] (3-5) .. controls (4,-5.33) and (4,-6.66) .. (3.2,-6.8);
            \foreach \y in {0, ..., 2}{
                \foreach \x in {0, 1, 2, 3}{
                    \draw[color=blue, thick] (\x - 0.3, -2 * \y + 0.3) rectangle (\x+0.3,-2 * \y +  -1.3);
                }
            }
        }
        \newsavebox{\leftpiece}
        \savebox{\leftpiece}{
            \foreach \y in {0, ..., 2}{
                \draw[fill, color=orange!20] (-0.2, -2 * \y + 0.2) rectangle (1.2,  -2 * \y + -1.2);
                \draw[fill, color=orange!20] (2-0.2, -2 * \y +  0.2) rectangle (3.2, -2 * \y +  -1.2);
            }
            \foreach \x in {0, ...,3}{
                \foreach \y in {0, ...,5}{
                    \draw (\x, -\y) node[circle, draw, inner sep=3pt](\x-\y){};
                }
            }
            \foreach \y in {0, ...,5}{
                \draw (0-\y)--(1-\y)--(2-\y)--(3-\y);
            }
            \foreach \x in {0, ...,3}{
            \draw(\x-0) -- (\x-1) -- (\x-2) -- (\x-3) -- (\x-4) -- (\x-5);
            }
            
            \draw (0-2) .. controls (-1,-2.33) and (-1,-3.66) .. (0-4);
            \draw (0-3) .. controls (-1,-3.33) and (-1,-4.66) .. (0-5);
            
            \foreach \y in {0, ..., 2}{
                \foreach \x in {0, ...,3}{
                    \draw[color=blue, thick] (\x - 0.3, -2 * \y + 0.3) rectangle (\x+0.3,-2 * \y +  -1.3);
                }
            }
        }
        \draw(0,0)node{\usebox{\leftpiece}};
        \foreach \n in {1, ..., 5}
        {
            \draw (3.75, -\n + 0.5) node{$\cdots$};
        }
        \draw(4.5,0)node{\usebox{\rightpiece}};
        \foreach \n in {1, ..., 3}
        {
            \draw (\n - 0.5, -5.5) node{$\vdots$};
        }
        \foreach \n in {1, ..., 3}
        {
            \draw (\n - 0.5 + 4.5, -5.5) node{$\vdots$};
        }
        \draw[thick, <->] (0, 0.75) -- node[above]{$2K$} (4.5 + 3, 0.75);
        \draw[thick, <->] (-1.25, 0) -- node[left]{$2L$} (-1.25, -7);
        \end{tikzpicture}}
        \caption{Slightly altered grid layout. Odd (resp. even) rows are connected to the row below on the right (resp. left) via two extra edges. In total, this amounts for $2L-2$ extra edges.
        The Width-2 ladder and Grid block layouts are depicted in blue and red respectively.}\label{fig:extra_edges}
\end{figure}
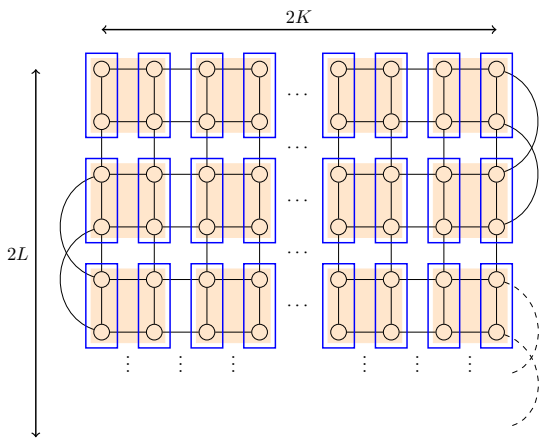

\begin{mypropo}
    In a $2L\times 2K$ grid layout ($n=4LK$) with $2L-2$ additional connections as in Figure \ref{fig:extra_edges}, any $n-$qubits linear reversible operator can be implemented in depth at most $15n/4 + O(1)$.
\end{mypropo}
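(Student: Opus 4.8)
The plan is to exhibit two different block decompositions of the altered grid and combine the cheap half of each, exactly as in Proposition~\ref{prop:ladder_2}. First I would describe the two layouts on the $2L\times 2K$ grid: the ``Grid'' layout, where qubits are packed into $LK$ blocks of size $p=4$ (each block a $2\times 2$ square), laid out as a line by snaking through the grid row-pair by row-pair; and the ``Width-2 ladder'' layout, where qubits are packed into $2LK$ blocks of size $p=2$ (each block a vertical pair of qubits in the same column), also laid out as a snaking line. The crucial point to check is that the $2L-2$ extra edges of Figure~\ref{fig:extra_edges} are precisely what is needed to make the \emph{line-of-blocks} property hold for \emph{both} layouts simultaneously: within each of the two pieces of the figure the native grid edges already chain the blocks of a row together and the vertical grid edges chain row-pairs together, but to continue the snake from the bottom of one $2K$-wide strip into the top of the next strip one needs the two additional connections per junction, and there are $L-1$ junctions on each side, giving $2L-2$ extra edges. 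I would state this as a short lemma ``both the Grid and the Width-2 ladder block layouts are valid for this topology'' and verify it by following the snake.

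Next I would invoke the per-step bounds already tabulated: for the Grid layout, Step~1 (Problem~1) costs depth $7n/4 + O(1)$, and for the Width-2 ladder layout, Step~2 (Problem~2) costs depth $2n + O(1)$ and Step~3 is $O(1)$. The algorithm I propose runs Step~1 using the $p=4$ (Grid) block structure, producing a matrix that is block north-west triangular with respect to the size-$4$ blocks. The bridge step is to convert this into a matrix that is block north-west triangular with respect to the size-$2$ blocks. I would argue, as in Proposition~\ref{prop:ladder_2}, that refining a size-$4$ block north-west triangular form into a size-$2$ block north-west triangular form is no harder than a full synthesis on the $O(1)$ qubits involved in a bounded number of adjacent size-$4$ blocks (one can even do it blockwise in parallel), hence costs depth $d^*(O(1)) = O(1)$. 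After the bridge, Step~2 and Step~3 are run with the $p=2$ (Width-2 ladder) block structure. Summing: $7n/4 + O(1)$ (Step~1, Grid) $+\,O(1)$ (bridge) $+\,2n + O(1)$ (Step~2, ladder) $+\,O(1)$ (Step~3) $= 15n/4 + O(1)$, which is the claimed bound.

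I would be careful about two points in writing this up. First, the direction of the snake (and hence which block is ``$b_i$'' versus ``$b_{i+1}$'') must be consistent between the two layouts at the moment of the bridge, or else one pays an extra relabeling/permutation cost; I would note that since Step~1 only needs \emph{block} north-west triangularity and the bridge is anyway $O(1)$, any fixed consistent choice works and the cost is absorbed. Second, one should confirm that the ``Width-2 ladder'' Step~2 bound of $2n$ genuinely applies here: the Width-2 ladder block-boxes act on two adjacent size-$2$ blocks, i.e.\ $4$ qubits arranged as a $2\times 2$ square with a ladder (plus whatever extra edges happen to fall inside), and every such $4$-qubit neighbourhood in the altered grid contains at least the ladder edges, so the Table~\ref{search}/Table~\ref{summary} entry for ``Width-2 ladder'' is a valid upper bound locally.

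The main obstacle I expect is purely combinatorial bookkeeping: verifying rigorously that the two snaking orders, together with the specified $2L-2$ extra edges, really do satisfy the line-of-blocks hypothesis for both block sizes at once — in particular that the extra edges are placed so that the size-$4$ snake and the size-$2$ snake ``turn the corner'' at the strip boundaries using exactly those edges. Everything after that (the per-step depth arithmetic and the $O(1)$ bridge) is a direct reuse of the machinery and bounds already established, so the proof is short modulo that verification; a small figure-referencing argument following Figure~\ref{fig:extra_edges} should suffice.
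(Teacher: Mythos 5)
Your proposal is correct and takes essentially the same route as the paper's own (very brief) proof: run Step~1 with the size-$4$ Grid block layout ($7n/4$), bridge to a size-$2$ block north-west triangular form at cost $O(1)$, and run Step~2 with the Width-2 ladder layout ($2n$), the $2L-2$ extra edges being exactly what gives consecutive size-$2$ blocks at the strip junctions the ladder local topology so the $2n$ bound applies. The additional bookkeeping you describe (the two snake orderings, the junction count, and the check that every pair of consecutive size-$2$ blocks contains the ladder edges) is precisely the verification the paper leaves implicit.
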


\begin{mypropo}
    In a $2L\times 2K$ grid layout ($n=4LK$) with diagonals (every other row) with $2L-2$ additional connections as in Figure \ref{fig:extra_edges}, any $n-$qubits linear reversible operator can be implemented in depth at most $7n/2 + O(1)$.
\end{mypropo}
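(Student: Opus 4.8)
The plan is to run the same ``mixed block size'' scheme as in the proof of Proposition~\ref{prop:ladder_2}, but with the diagonal-enhanced per-step bounds, so that the only change relative to that proposition is the improvement of the first phase from $7n/4$ to $3n/2$. Concretely, I would first argue that the topology of Figure~\ref{fig:extra_edges} augmented with the stated diagonals carries two compatible block layouts: (i) a block-size-$4$ layout, obtained by tiling the grid with $2\times 2$ cells and snaking through them, for which every pair of consecutive blocks realizes (a supergraph of) the ``Grid + diagonals'' local topology; and (ii) a block-size-$2$ layout, obtained by snaking through the $2$-qubit vertical rungs and closing the snake at the two-row-strip boundaries with the $2L-2$ extra edges, for which every pair of consecutive blocks realizes the ``Width-2 ladder + diagonals'' local topology. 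The delicate point, and the one I would spend the most care on, is that the diagonals added on every other inter-row gap must fall exactly where \emph{both} layouts need them — inside and at the interfaces of the $2\times 2$ cells for (i), and between consecutive rungs for (ii) — including at the turns of each snake.

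Granting the two layouts, the synthesis splits into three phases. Phase~1 runs Step~1 of Algorithm~\ref{PseudoCode} with block size $4$ using layout (i): it reduces $A = UPW$ to a block-north-west-triangular matrix with block size $4$ in depth $\frac{n}{4}\,d_1(4) + O(1) = 3n/2 + O(1)$, reading $d_1(4)=6$ off the ``Grid + diagonals'' row of Table~\ref{search}. Phase~2 turns this matrix into a block-north-west-triangular matrix with block size $2$; exactly as in the proof of Proposition~\ref{prop:ladder_2}, this is no harder than a full synthesis restricted to each size-$4$ block, those blocks act on disjoint qubits and can be processed in parallel, and the intra-block edges needed are present, so the cost is at most $d^*(4) = O(1)$. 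Phase~3 runs Step~2 and Step~3 with block size $2$ using layout (ii), reducing the matrix to the identity in depth $\frac{n}{2}\,d_2(2) + d^*(2) = 2n + O(1)$, with $d_2(2)=4$ taken from the ``Width-2 ladder + diagonals'' row of Table~\ref{search}.

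Adding the three contributions gives a total depth of $3n/2 + O(1) + 2n + O(1) = 7n/2 + O(1)$, as claimed. Everything after the first paragraph is a direct transcription of bounds already established in Tables~\ref{search} and~\ref{table:summary_of_summary} together with the block-granularity conversion already used for Proposition~\ref{prop:ladder_2}; the genuine work — and the only place an error could hide — is in checking that a single fixed placement of the ``every other row'' diagonals and of the $2L-2$ extra edges simultaneously witnesses both block layouts with their full local connectivity, in particular that the extra edges correctly realize the rung continuations of layout (ii) at the strip turns and that no snake turn of layout (i) is left with less than the ``Grid + diagonals'' connectivity.
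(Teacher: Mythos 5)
Your proposal is correct and follows exactly the paper's route: the paper's own proof of this proposition is a two-line remark that it is ``similar to'' Proposition~\ref{prop:ladder_2}, i.e.\ run Step~1 with the size-$4$ Grid+diagonals layout ($3n/2$), convert to block size $2$ in depth $d^*(4)=O(1)$, and run Step~2 on the Width-2 ladder layout enabled by the $2L-2$ extra edges ($2n$), for a total of $7n/2+O(1)$. The turn-connectivity caveat you flag (whether the ``every other row'' diagonals still give the full Grid+diagonals local topology across the vertically adjacent block pairs at the snake turns) is a genuine subtlety, but it is one the paper itself leaves unaddressed, so your argument is if anything more careful than the original.
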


The proofs of those propositions are similar to the one of Proposition \ref{prop:ladder_2}. 
We use the additional connections in order to allow for a Width-2 ladder block layout to perform the second step.

Consider for instance a square grid of even dimensions $n=4L^2$. 
In that setting, by adding $2L-2 = o(n)$ connections, we can obtain an asymptotic improvements of $n/4$ with respect to the complexity obtained for a standard Grid and Grid + diagonals layouts.

\subsection{All-to-all connectivity between larger blocks}

We now explore the more general case where we have a full connectivity between any pair of qubits in neighboring blocks and where the block size is arbitrary. In other words, in that setting, all the qubits in one block are fully connected, and all the qubits in block $b_i$ are connected to all the qubits in block $b_{i+1}$.
This is an hypothesis that can become true in some hardware models where each qubit has an interaction radius: each qubit can interact with any qubit within a certain distance \cite{henriet2020quantum}. In the case of a grid, with suitable interaction radius, our hypothesis can be true and the larger the radius the larger the block size can be. 

We propose two ways to solve the two problems, our method essentially relies on previous works about the synthesis of linear reversible circuits for unconstrained architecture \cite{de2021reducing,maslov2022depth}. 

\subsubsection{Problem 1}

By assumption we are given a full rank $2p \times p$ binary matrix of the form 
\[ \begin{bmatrix} A_1 \\ A_2 \end{bmatrix}. \]
With a CNOT circuit of depth $1$ we can add suitable rows of $A_2$ to $A_1$ to make $A_1$ invertible. Then, we consider the matrix $B = A_2A_1^{-1}$ that we zero with the following available operations: 
\begin{itemize}
    \item elementary row operations on $A_2$ $\to$ available row operations on $B$,
    \item elementary row operations on $A_1$ $\to$ available column operations on $B$,
    \item elementary row operations from $A_1$ to $A_1$ $\to$ flip one entry of $B$. 
\end{itemize}

It is well-known that we can encode $B$ in a bipartite graph such that any matching on $B$ corresponds to a set of parallel row operations between $A_1$ and $A_2$. If $B$ has at most $k$ entries equal to $1$ on each row and column, then the corresponding bipartite graph has degree $k$ and can be decomposed as a sum of $k$ matchings. Therefore, $B$ can be zeroed in depth at most $k$. $B$ is $p \times p$ so we are ensured that $B$ can be zeroed in depth at most $p$.

\begin{myprop} Problem 1 can be solved in depth at most $1+p$.
\end{myprop}

We can improve this result by using a technique used in \cite{maslov2022depth}, they show that we can write 
\[ B = B' \oplus \mathbf{1} \cdot v_1^T \oplus v_2 \cdot \mathbf{1}^T \]
where $v_1, v_2$ are two arbitrary vectors and $B'$ has at most $\floor{p/2}$ entries on each row and column. Therefore step 1 can be realized first by performing $B \oplus B'$ in depth $\floor{p/2}$. Then 
\[  B \oplus B' = \begin{bmatrix} \mathbf{1} &  v_2 \end{bmatrix} \cdot \begin{bmatrix} v_1 &  \mathbf{1} \end{bmatrix}^T  = w_1 \cdot w_2^T \]
where $w_1, w_2$ are $p \times 2$. We can reduce in parallel both $w_1$ and $w_2$ with elementary row and column operations on $B \oplus B'$. Note that only $2$ different nonzero rows can be found in $w_1$: $[1,0]$ and $[1,1]$, similarly in $w_2$ we can only find $[0,1]$ and $[1,1]$. In both cases, we can zero any duplicate in depth at most $\ceil{\log(p)}$: given $m$ occurences of $[1,0]$ for instance, we can zero $\floor{m/2}$ of them in depth $1$, and we repeat the process. We eventually have the top $2 \times 2$ entries of $w_1$ and $w_2$ that are nonzero: this corresponds to a $2 \times 2$ block in $B$ that can be zeroed in depth at most $2$. 

\begin{myprop} Problem 1 can be solved in depth at most $3+\floor{p/2}+\ceil{\log(p)}$.
\end{myprop}

\subsubsection{Problem 2}

In the all-to-all case, Problem 2 is very similar to Problem 1. By assumption we are given a $2p \times 2p$ binary matrix of the form 
\[ \begin{bmatrix} A_1 & A_3 \\ A_2 & 0 \end{bmatrix} \]

where $A_2$ and $A_3$ are invertible. In depth $2$ we can obtain the matrix
\[ \begin{bmatrix} A_2 & 0 \\ A_1 \oplus A_2 & A_3 \end{bmatrix} \]
and we have to zero $A_1 \oplus A_2$ using $A_2$ in a similar way that we did during first step. 

\begin{myprop} Problem 2 can be solved in depth at most $2+p$.
\end{myprop}

\begin{myprop} Problem 2 can be solved in depth at most $4+\floor{p/2}+\ceil{\log(p)}$.
\end{myprop}

\subsubsection{Third step}

For this step we can also rely on the work already done for linear reversible circuits synthesis on unconstrained architectures \cite{de2021reducing,maslov2022depth,jiang2020optimal}. The best method so far is a divide-and-conquer algorithm but for simplicity we use the adaptation of Kutin \textit{et al.}'s algorithm to the all-to-all connectivity: any $p$-qubit linear reversible operator can be synthesized in depth at most $2p+6$.

\subsubsection{Results}

\begin{mythm} In a quantum hardware with blocks of $p$ qubits arranged on a line and with full connectivity between any pair of consecutive blocks, any $n$-qubit linear reversible operator can be synthesized in depth at most $\left(2 + 3/p\right)n+ 2p + 6$.
\end{mythm}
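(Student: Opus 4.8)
The plan is to simply assemble the per-step depth bounds already derived, plugging them into the master inequality
\[ d(n) \leq \frac{n}{p}\bigl(d_1(p)+d_2(p)\bigr) + d^*(p). \]
For the all-to-all-between-blocks setting, I would take $d_1(p)$ to be the bound from the second Problem~1 proposition, i.e.\ $d_1(p) = 3 + \lfloor p/2\rfloor + \lceil \log(p)\rceil$, and $d_2(p)$ from the second Problem~2 proposition, i.e.\ $d_2(p) = 4 + \lfloor p/2\rfloor + \lceil \log(p)\rceil$. For the final block-diagonal-to-identity step, I would use $d^*(p) = 2p+6$, the stated all-to-all adaptation of Kutin \textit{et al.}'s algorithm.

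The main computation is then just arithmetic: $d_1(p) + d_2(p) = 7 + 2\lfloor p/2\rfloor + 2\lceil \log(p)\rceil \leq 7 + p + 2\lceil\log(p)\rceil$. However, to land exactly on the clean statement $\left(2+3/p\right)n + 2p + 6$, I suspect the intended per-step counts are actually the \emph{simpler} propositions — $d_1(p) = 1+p$ and $d_2(p) = 2+p$ — rather than the logarithmically-improved ones, since $d_1(p)+d_2(p) = 3 + 2p$ gives
\[ \frac{n}{p}(2p+3) + d^*(p) = \left(2 + \frac{3}{p}\right)n + d^*(p), \]
and with $d^*(p) = 2p+6$ this is precisely $\left(2+3/p\right)n + 2p+6$. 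So I would state the proof using $d_1(p) = 1+p$, $d_2(p) = 2+p$, $d^*(p) = 2p+6$, substitute into the master inequality, and simplify.

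The only genuinely substantive ingredients are the three building blocks, and all of them are established earlier: the reduction of CNOT-circuit synthesis on the line-of-blocks to two sorting networks of block-boxes solving Problem~1 and Problem~2 (with the master depth inequality), the solution of Problem~1 in depth $1+p$ (add one depth-$1$ layer to make $A_1$ invertible, then zero $B = A_2 A_1^{-1}$ in depth $\le p$ via a degree-$p$ bipartite-graph edge-colouring), the analogous solution of Problem~2 in depth $2+p$, and the $2p+6$ bound for unconstrained $p$-qubit synthesis. So the proof is essentially one substitution. The only thing to be careful about is that $n$ must be a multiple of $p$ (already an assumption of the block framework) so that $n/p$ is an integer and the sorting networks have exactly $m = n/p$ block-layers; no rounding slack is hidden there. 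The ``hard part'' is really just making sure the chosen $d_1, d_2, d^*$ are the ones that produce the advertised constant — there is no deeper obstacle.

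\begin{proof}
By the reduction established above, synthesizing an $n$-qubit linear reversible operator on a line of $m = n/p$ fully-connected blocks of size $p$ amounts to executing two LNN sorting networks of block-boxes — one solving instances of Problem~1, one solving instances of Problem~2 — followed by a parallel direct synthesis on each block. This yields the bound
\[ d(n) \leq \frac{n}{p}\bigl(d_1(p) + d_2(p)\bigr) + d^*(p), \]
where $d_1(p)$, resp.\ $d_2(p)$, bounds the depth of a single Problem~1, resp.\ Problem~2, block-box, and $d^*(p)$ bounds the depth of a $p$-qubit synthesis. In the all-to-all-between-blocks setting we have shown $d_1(p) \leq 1 + p$ and $d_2(p) \leq 2 + p$, and the adaptation of Kutin \textit{et al.}'s algorithm gives $d^*(p) \leq 2p + 6$. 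Substituting,
\[ d(n) \leq \frac{n}{p}\bigl((1+p) + (2+p)\bigr) + 2p + 6 = \frac{n}{p}(2p + 3) + 2p + 6 = \left(2 + \frac{3}{p}\right)n + 2p + 6, \]
which is the claimed bound.
\end{proof}
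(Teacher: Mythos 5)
Your proof is correct and matches the paper's own argument exactly: the paper also plugs $d_1(p)=1+p$, $d_2(p)=2+p$, and $d^*(p)=2p+6$ into the master inequality $d(n)\leq \frac{n}{p}(d_1(p)+d_2(p))+d^*(p)$ to get $3n/p+2n+2p+6$. Your identification of the simpler (non-logarithmic) propositions as the intended ingredients is precisely right; the logarithmic bounds are reserved for the following theorem.
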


\begin{proof}
We simply add the depths: 
\[ d(n) = n/p \times (1+p+2+p) + 2p+6 = 3n/p + 2n + 2p + 6. \]
\end{proof}

\begin{mythm} In a quantum hardware with blocks of $p$ qubits arranged on a line and with full connectivity between any pair of blocks, any $n$-qubit linear reversible operator can be synthesized in depth at most $\left(1+7/p+2\ceil{\log(p)}/p\right)n + 2p + 6$.
\end{mythm}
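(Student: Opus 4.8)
The plan is to mirror the proof of the preceding theorem, but to feed the \emph{improved} per-problem bounds into the master depth recursion $d(n)\le \frac{n}{p}\big(d_1(p)+d_2(p)\big)+d^{*}(p)$ established in Section~\ref{sec::block}. First I would recall the three ingredients already in hand: in the all-to-all-between-blocks setting, Problem~1 admits a solution of depth at most $3+\floor{p/2}+\ceil{\log(p)}$ and Problem~2 a solution of depth at most $4+\floor{p/2}+\ceil{\log(p)}$ (the sharpened propositions proved just above, using the rank-one correction trick of \cite{maslov2022depth}), while the third step --- a free-standing synthesis of a $p$-qubit linear reversible operator on a complete graph --- costs at most $2p+6$ via the all-to-all adaptation of Kutin \textit{et al.}'s algorithm.

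Next I would check that the block-boxes of the global algorithm really are instances of these problems on the relevant connectivity graph: each step-1 block-box acts on two consecutive blocks, i.e.\ on $2p$ qubits whose induced subgraph is the complete graph $K_{2p}$ by the full-connectivity-between-blocks hypothesis, so it is exactly an instance of Problem~1 with $G=K_{2p}$; symmetrically each step-2 block-box is an instance of Problem~2 with the same $G$; and the step-3 subroutines act on single blocks ($K_p$). Hence $d_1(p)\le 3+\floor{p/2}+\ceil{\log(p)}$, $d_2(p)\le 4+\floor{p/2}+\ceil{\log(p)}$ and $d^{*}(p)\le 2p+6$ may be substituted.

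Then the arithmetic is routine:
\[
d(n)\le \frac{n}{p}\Big(\big(3+\floor{p/2}+\ceil{\log(p)}\big)+\big(4+\floor{p/2}+\ceil{\log(p)}\big)\Big)+2p+6
=\frac{n}{p}\Big(7+2\floor{p/2}+2\ceil{\log(p)}\Big)+2p+6,
\]
and bounding $2\floor{p/2}\le p$ yields
\[
d(n)\le \frac{n}{p}\big(7+p+2\ceil{\log(p)}\big)+2p+6=\Big(1+\frac{7}{p}+\frac{2\ceil{\log(p)}}{p}\Big)n+2p+6,
\]
as claimed.

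Frankly there is no real obstacle here once the propositions on Problem~1 and Problem~2 are granted: the only things to be careful about are (i) bounding the two floors in the direction $2\floor{p/2}\le p$ that collapses the coefficient to the clean $1+7/p+2\ceil{\log(p)}/p$ rather than leaving a $2\floor{p/2}/p$ term, and (ii) remembering that $d^{*}(p)$ is an additive constant incurred \emph{once} --- the final block-diagonal-to-identity stage runs all blocks in parallel --- so it contributes $2p+6$ and not $(2p+6)n/p$. The genuine mathematical work, namely the rank-one decomposition $B=B'\oplus\mathbf{1}\cdot v_1^{T}\oplus v_2\cdot\mathbf{1}^{T}$ and the logarithmic-depth duplicate elimination, has already been carried out in the propositions being invoked.
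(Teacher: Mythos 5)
Your proof is correct and follows essentially the same route as the paper: plug the sharpened bounds $3+\floor{p/2}+\ceil{\log(p)}$ and $4+\floor{p/2}+\ceil{\log(p)}$ for Problems 1 and 2 and the $2p+6$ bound for the block synthesis into $d(n)\le \frac{n}{p}(d_1(p)+d_2(p))+d^{*}(p)$, then use $2\floor{p/2}\le p$. The only difference is that you spell out the (routine) check that each block-box is an instance of the corresponding problem on a complete graph and that $d^{*}$ is incurred once, which the paper leaves implicit.
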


\begin{proof}
We simply add the depths: 
\[ d(n) = n/p \times (3+\floor{p/2}+\ceil{\log(p)} + 4+\floor{p/2}+\ceil{\log(p)}) + 2p+6 < n + 7n/p + 2n\ceil{\log(p)}/p + 2p + 6. \]
\end{proof}

\section{Conclusion} \label{sec::conclusion}

We proposed a block generalization of Kutin \emph{et al.}'s algorithm that synthesizes CNOT circuits for an LNN architecture. Our generalization needs the blocks of qubits to be arranged as a line in the hardware. Despite this prerequisite, some realistic quantum hardware can fit into our framework. We showed that the depth complexity of our algorithm essentially depends on the solving of two elementary problems involving $O(p)$-sized boolean matrices where $p$ is the size of the block. We brute-forced the solution for some small blocks ($p \leq 4$) and we gave an algorithm for general $p$ when the blocks are fully connected. As a result we improved the depth complexity for useful quantum hardware such as the grid. 

As a future work, it would be interesting to extend this framework to other classes of circuits: CZ circuits, Clifford circuits, phase polynomials. We could have used normal form for Clifford circuits \cite{aaronson2004improved, maslov2018shorter, duncan2020graph, bravyi2021hadamard} to propose preliminary results but we believe deeper analysis can be done.

\section*{Acknowledgments}

This work has been supported by the French state through the ANR as a part of \emph{Plan France 2030}, projects NISQ2LSQ (ANR-22-PETQ-0006) and EPiQ (ANR-22-PETQ-0007), as well as the ANR project SoftQPro (ANR-17-CE25-0009).
The authors would like to thank Jérôme Pioux for his patience: running the enumerations took a bit longer than anticipated :)

\bibliographystyle{abbrv}
\bibliography{Biblio}

\end{document}